\definecolor{mygreen}{RGB}{20,120,60}
\title{Exponentially Faster Massively Parallel Maximal Matching\footnote{A preliminary version appeared in the proceedings of the \emph{60th annual IEEE Symposium on Foundations of Computer Science }(FOCS 2019).\vspace{0.2cm}}}
\author{Soheil Behnezhad\thanks{Khoury College of Computer Sciences, Northeastern University.}\\ \texttt{s.behnezhad@northeastern.edu}
\and MohammadTaghi Hajiaghayi\footnotemark[3]\\ \texttt{hajiagha@cs.umd.edu}
\and David G. Harris\thanks{Department of Computer Science, University of Maryland.}
\\ \texttt{davidgharris29@gmail.com}
}
\date{}
\newcommand{\MPC}[0]{\ensuremath{\mathsf{MPC}}}
\newcommand{\local}[0]{\ensuremath{\mathsf{LOCAL}}}
\newcommand{\PRAM}[0]{\ensuremath{\mathsf{PRAM}}}
\newcommand{\E}[0]{\ensuremath{\mathbb{E}}}
\newcommand{\greedymatching}[1]{\ensuremath{\mathsf{GreedyMM}(#1)}}
\newcommand{\edgeoracle}[0]{\ensuremath{\mathcal{EO}}}
\newcommand{\degreeoracle}[0]{\ensuremath{\mathcal{DO}}}
\newcommand{\yes}[0]{\ensuremath{\textsc{yes}}}
\newcommand{\no}[0]{\ensuremath{\textsc{no}}}
\DeclareMathOperator{\poly}{poly}
\DeclareMathOperator{\polylog}{polylog}
\DeclareMathOperator{\polyloglog}{polyloglog}
\DeclareMathOperator{\res}{rdeg}
\renewcommand{\epsilon}{\varepsilon}
\DeclareMathOperator{\var}{Var}
\newtheorem{theorem}{Theorem}
\newtheorem{lemma}{Lemma}[section]
\newtheorem{proposition}[lemma]{Proposition}
\newtheorem{corollary}[lemma]{Corollary}
\newtheorem{definition}[lemma]{Definition}
\newtheorem{claim}[lemma]{Claim}
\newcommand{\smparagraph}[1]{
\vspace{0.3cm}
\noindent \textbf{#1}
}
\definecolor{mygreen}{RGB}{20,125,20}
\definecolor{linkcolor}{RGB}{0,0,230}
\definecolor{mylightgray}{RGB}{230,230,230}
\newcommand{\etal}[0]{\textit{et al.}}
\algnewcommand{\IIf}[1]{\State\algorithmicif\ #1\ \algorithmicthen}
\algnewcommand{\EndIIf}{\unskip\ \algorithmicend\ \algorithmicif}
\newcounter{myalgctr}
\newenvironment{tbox}{
\vspace{0.2cm}
\begin{tcolorbox}[width=\textwidth,
                  enhanced,
                  boxsep=2pt,
                  left=1pt,
                  right=1pt,
                  top=4pt,
                  boxrule=1pt,
                  arc=0pt,
                  colback=white,
                  colframe=black,
                  unbreakable
                  ]
}{
\end{tcolorbox}
}
\newenvironment{graytbox}{
\vspace{0.2cm}
\begin{tcolorbox}[width=\textwidth,
                  enhanced,
                  frame hidden,
                  boxsep=6pt,
                  left=1pt,
                  right=1pt,
                  top=4pt,
                  boxrule=1pt,
                  arc=0pt,
                  colback=mylightgray,
                  colframe=black,
                  breakable
                  ]
}{
\end{tcolorbox}
}
\newcommand{\tboxhrule}[0]{\vspace{0.1cm} \hrule \vspace{0.2cm}}
\newenvironment{titledtbox}[1]{\begin{tbox}#1 \tboxhrule}{\end{tbox}}
\newenvironment{tboxalg}[1]{\refstepcounter{myalgctr}\begin{titledtbox}{\textbf{Algorithm \themyalgctr.} #1}}{\end{titledtbox}}
\begin{document}
\maketitle

\begin{abstract}
\setlength{\parskip}{0.4em}
The study of approximate matching in the {\em Massively Parallel Computations} (\MPC{}) model has recently seen a burst of breakthroughs. Despite this progress we still have a limited understanding of {\em maximal matching} which is one of the central problems of parallel and distributed computing. All known \MPC{} algorithms for maximal matching either take polylogarithmic time which is considered inefficient, or require a strictly super-linear space of $n^{1+\Omega(1)}$ per machine.

In this work, we close this gap by providing a novel analysis of an extremely simple algorithm, which is a variant of an algorithm  conjectured to work by \citet*{DBLP:conf/stoc/CzumajLMMOS18}. The algorithm edge-samples the graph, randomly partitions the vertices, and finds a random greedy maximal matching within each partition. We show that this algorithm drastically reduces the vertex degrees. This, among other results, leads to an $O(\log \log \Delta)$ round algorithm for maximal matching with $O(n)$ space (or even {\em mildly sublinear} in $n$ using standard techniques).

As an immediate corollary, we get a $2$ approximate {\em minimum vertex cover} in essentially the same rounds and space, which is the optimal approximation factor under standard assumptions. We also get an improved $O(\log\log \Delta)$ round algorithm for $1 + \varepsilon$ approximate matching. All these results can also be implemented in the {\em congested clique} model in the same number of rounds.
\end{abstract}
\clearpage

\section{Introduction}

The success of modern parallel frameworks such as MapReduce~\cite{DBLP:journals/cacm/DeanG08}, Hadoop~\cite{DBLP:books/daglib/0025439}, or Spark~\cite{DBLP:conf/hotcloud/ZahariaCFSS10}  has led to an active line of research for understanding the true computational power of such systems. The {\em Massively Parallel Computations} (\MPC{}) model provides a clean abstraction of these frameworks and has become the standard theoretical model for this purpose (see Section~\ref{sec:mpc} for the model).

In this work, we consider the {\em maximal matching} problem in the \MPC{} model. It is one of the most fundamental graph problems in parallel and distributed computing with far-reaching practical and theoretical implications. The study of maximal matching can be traced back to \PRAM{} algorithms of 1980s \cite{DBLP:conf/stoc/Luby85, DBLP:journals/ipl/IsraelI86, DBLP:journals/jal/AlonBI86} and has been studied in various computational models since then.

In the \MPC{} model, maximal matching is particularly important; an algorithm for it directly gives rise to algorithms for $1+\varepsilon$ approximate {\em maximum matching}, $2+\varepsilon$ approximate {\em maximum weighted matching}, and $2$ approximate {\em minimum vertex cover} with essentially the same number of rounds and space. Each of these problems has been studied on its own \cite{DBLP:conf/stoc/CzumajLMMOS18, DBLP:journals/corr/abs-1709-04599, DBLP:conf/podc/GhaffariGKMR18, assadiedcs, DBLP:conf/spaa/AssadiK17, DBLP:conf/spaa/BehnezhadDETY17, DBLP:journals/corr/abs-1807-08745, DBLP:conf/spaa/AhnG15}.

\paragraph{Known bounds.} For many graph problems, including maximal matching, there are $O(\log n)$ round \MPC{} algorithms coming from straightforward simulation of \PRAM{} algorithms \cite{DBLP:conf/stoc/Luby85, DBLP:journals/ipl/IsraelI86, DBLP:journals/jal/AlonBI86}. This can be improved to $\widetilde{O}(\sqrt{\log \Delta})$ rounds via simulation of distributed LOCAL algorithms \cite{ghaffariuitto}.  The main goal, however, is to obtain significantly faster (i.e., subpolylogarithmic round) algorithms by further utilizing \MPC{}'s additional powers.  

Currently, the only known such algorithm for maximal matching is that of Lattanzi \etal~\cite{DBLP:conf/spaa/LattanziMSV11} which requires $O(1/\delta)$ rounds using $O(n^{1+\delta})$ space. Their algorithm's round complexity, however, blows up back to $\Theta(\log n)$ as soon as memory becomes $O(n)$. In comparison, due to a breakthrough of \citet*{DBLP:conf/stoc/CzumajLMMOS18}, we have algorithms for $1+\varepsilon$ approximate matching that take $O(\log \log n)$ rounds using $O(n)$ space \cite{DBLP:conf/stoc/CzumajLMMOS18, DBLP:conf/podc/GhaffariGKMR18, assadiedcs}. Unfortunately, this progress on approximate matching offers no help for maximal matching or related problems. In fact,  these algorithms also require up to $\Omega(\log n)$ rounds to maintain maximality.

We also mention an MPC algorithm of \citet*{ghaffari2018improved} to obtain a maximal independent set (MIS) of a graph in $O(\log \log n)$ rounds using $\tilde O(n)$ space per machine. This algorithm could be applied to the line graph of $G$, which has $m$ vertices, but it would then require $\tilde O(m)$ space per machine --- significantly larger than the desired bound $O(n)$.

\paragraph{Our contribution.} In this paper, we give \MPC{} algorithms for maximal matching that are exponentially faster than the state-of-the-art (we describe our precise results in Section~\ref{sec:results}). We achieve this by providing a novel analysis of an extremely simple and natural algorithm: namely, the algorithm edge-samples the graph, randomly partitions the vertices into disjoint subsets, and finds a greedy maximal matching within the induced subgraph of each partition. (See Algorithm~\ref{alg:nearmaximal} in Section~\ref{sec:roadmap} for formal details.) It then commits the edges of this greedy matchings to the final output, thereby simplifying the residual graph. 

This partitioning is useful since each induced subgraph can be sent to a different machine. Czumaj \etal{}~\cite{DBLP:conf/stoc/CzumajLMMOS18} had conjectured that a variant of this algorithm might work and left its analysis as one of their main open problems:\footnote{A more detailed variant of the algorithm was also described in the following TCS+ talk by Artur Czumaj (starts from 1:03:23): \url{https://youtu.be/eq0jwAnJu9c?t=3803}.}
\begin{quotation}
\noindent ``{\em Finally, we suspect that there is a simpler algorithm for the problem {\normalfont [...]} by simply greedily matching high-degree vertices on induced subgraphs {\normalfont [...]} in every phase. Unfortunately, we do not know how to analyze this kind of approach.}'' \cite{DBLP:conf/stoc/CzumajLMMOS18}
\end{quotation}

We are not able to show that the simple process proposed by Czumaj \etal{}~\cite{DBLP:conf/stoc/CzumajLMMOS18} would work directly. However, we show something almost as powerful: the process significantly reduces the total number of edges of the residual graph \emph{in expectation}.  By adding a few post-processing steps (see Algorithm~\ref{alg:leftoververtices} in Section~\ref{sec:leftoververtices} later), we can ensure that the maximum degree of the residual graph gets reduced significantly and with high probability.

We summarize our results and their implications in Section~\ref{sec:results} and  give a high-level overview of the analysis in Section~\ref{sec:highlevel}.

\subsection{Main Results}\label{sec:results}

\begin{graytbox}
\begin{theorem}[main result]\label{thm:main}
	Given a graph $G$ with $n$ vertices, and $m$ edges and max degree $\Delta$, there is a randomized \MPC{} algorithm to compute a maximal matching that	\begin{enumerate}[label={(\arabic*)}, itemsep=0ex]
		\item takes $O(\log \log \Delta)$ rounds using $O(n)$ space per machine, 
		\item or takes $O(\log \frac{1}{\delta})$ rounds using $O(n^{1+\delta})$ space per machine, for any parameter $\delta \in (0, 1)$.
		\end{enumerate}
	The algorithm succeeds w.e.h.p.\footnote{We say an event occurs \emph{with exponentially high probability} (w.e.h.p.) if it occurs with probability $1 - e^{-n^{\Omega(1)}}$.} and requires an optimal total space of $O(m)$.
\end{theorem}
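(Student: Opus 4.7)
The plan is to prove Theorem~\ref{thm:main} by iterating a simple degree reduction phase $T$ times on the residual (unmatched) graph. A single phase consists of: (i) edge-sample $G$ at rate $p$, (ii) assign each vertex independently and uniformly to one of $k$ machines, producing $k$ sampled induced subgraphs $G_1,\dots,G_k$, (iii) each machine computes a random greedy maximal matching on its $G_i$, and (iv) commit all these matchings to the output and remove the matched vertices before the next phase. Parameters are chosen so that each $G_i$ fits in memory. In the $O(n)$-space setting we take $p=1$ and $k=\Theta(\sqrt{\Delta})$, which (by Chernoff on the number of edges inside $G_i$) makes each $G_i$ have $\Ot{n}$ edges w.e.h.p.; in the $O(n^{1+\delta})$-space setting we take $k=\Theta(\sqrt{\Delta/n^\delta})$.

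The heart of the proof is a single degree reduction lemma: after one phase, the residual graph has maximum degree at most $\Ot{\sqrt{\Delta}/n^{\delta/2}}$ w.e.h.p.\ (which, for $\delta=0$, is $\Ot{\sqrt{\Delta}}$). Given this lemma, iterating the recurrence $\Delta_{t+1}\leq \Ot{\sqrt{\Delta_t}/n^{\delta/2}}$ yields $\Delta_T = O(1)$ after $T=O(\log\log \Delta)$ phases when $\delta=0$, and after $T=O(\log(1/\delta))$ phases when $\delta>0$: in the latter case the recurrence has fixed point near $n^\delta$, and once $\Delta_t \leq n^\delta$ the entire residual graph (with at most $n^{1+\delta}$ edges) fits on a single machine and can be finished in one round.

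To prove the degree reduction lemma, I would fix a vertex $v$ and bound its residual degree. Since matched vertices are removed, each residual neighbor of $v$ must be unmatched, and so must $v$. The key probabilistic input is a standard fact about random greedy matchings: a vertex of degree $d$ in the (sampled) induced subgraph of its partition is unmatched by the random greedy MM with probability at most $\Ot{1/d}$. Conditioning on the partition and applying Chernoff to the distribution of $v$'s neighbors gives that each of the $k$ partitions $V_j$ contains $\Theta(\Delta/k)$ neighbors of $v$ w.e.h.p., so in expectation $v$ has $\Delta\cdot \Ot{1/(\Delta/k)}=\Ot{k}$ surviving neighbors, which matches the desired bound $\Ot{\sqrt{\Delta/n^\delta}}$ for our choice of $k$.

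The main obstacle is converting this expectation bound into the exponentially strong $1-e^{-n^{\Omega(1)}}$ concentration required for w.e.h.p.\ correctness. The survival indicators for different neighbors of $v$ are highly correlated through the random greedy MM run on each partition, and classical Chernoff does not apply directly. I would proceed by (a) conditioning on the random partition, whose high-probability structure is handled by Chernoff, then (b) for each partition $V_j$ separately, applying a bounded-differences or edge-exposure martingale argument to the random greedy MM on $G_j$ to control how many of $v$'s $\Theta(\Delta/k)$ neighbors in $V_j$ survive unmatched, and finally (c) union-bounding over the $k$ partitions at a cost absorbed into the $\Ot{\cdot}$ notation. Carrying out step (b) with tail bounds strong enough to survive the union bound, while correctly accounting for the coupling between the edge sampling, the partition assignment, and the greedy matching, is the main technical challenge of the proof.
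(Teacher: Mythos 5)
There are two genuine gaps here, and they sit at exactly the points where the paper has to work hardest. First, your ``key probabilistic input'' --- that a vertex of degree $d$ in its sampled partition is left unmatched by random greedy maximal matching with probability $\widetilde{O}(1/d)$ --- is not a standard fact and is false in general: whether a vertex $u$ ends up unmatched is governed by its neighbors' degrees, not its own (e.g.\ if $v$ has $d$ neighbors each of which also has $N \gg d$ private neighbors, a random greedy matching leaves $v$ unmatched with probability close to $1$). The paper's argument is structured precisely to avoid needing any such per-neighbor bound. It defines $Z_{v,i}$, the number of \emph{unmatched} neighbors of $v$ landing in partition $i$, notes $\E[Z_{v,i}]=\E[Z_{v,1}]$ by symmetry, and runs a dichotomy: if $\E[Z_{v,1}]$ is small then $\deg^{\text{res}}_M(v)\le \sum_i Z_{v,i}$ is small by Markov; if it is large, then concentration forces $Z_{v,\chi(v)}$ to be large too, and a vertex with many unmatched sampled neighbors \emph{in its own partition} must itself be matched w.h.p.\ (Lemma~\ref{lem:edgesamplegreedy}). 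No claim about individual neighbors' matching probabilities is ever made.

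Second, your route to concentration is the one the paper explicitly rules out. A bounded-differences/edge-exposure martingale for $Z_{v,j}$, even after conditioning on the partition, runs over $\Theta(n\Delta/k^2)$ edge variables, each of which can change the matching; this yields additive error on the order of $\widetilde{O}(\sqrt{n\Delta}/k)$, which swamps the target $\widetilde{O}(\sqrt{\Delta})$ whenever $\Delta\ll n$. The paper's workaround is the Efron--Stein inequality combined with the Yoshida et al.\ query-complexity bound (Proposition~\ref{prop:greedmatchquery}): the point is that although \emph{a priori} every coordinate could affect $Z_{v,1}$, in expectation only $\poly(\Delta)$ of them do, giving $\var(Z_{v,1})\le O(\Delta^{1.4})$. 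Even so, this yields only Chebyshev-strength concentration, $1-1/\poly(\Delta)$, not w.e.h.p.; consequently a single phase provably does \emph{not} reduce the maximum degree outright, only the degrees of all but an $n/\poly(\Delta)$ fraction of vertices (Lemma~\ref{lem:nearmaximal}), and the paper needs a separate cleanup stage (Algorithm~\ref{alg:leftoververtices} plus a final single-machine step, and an amplification argument as in Claim~\ref{cl:alg1whp}) before it can iterate. Your proposal's single w.e.h.p.\ degree-reduction lemma is therefore too strong as stated, and the plan for proving it would fail at step (b), which you correctly flag as the crux but for which the proposed tools do not suffice.
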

\end{graytbox}

Theorem~\ref{thm:main} part (1) provides the first subpolylogarithmic round \MPC{} algorithm for maximal matching that does not require super-linear space in $n$. In fact, it improves exponentially over the prior algorithms in this regime  \cite{DBLP:conf/stoc/Luby85,DBLP:conf/spaa/LattanziMSV11,ghaffariuitto}. Furthermore, Theorem~\ref{thm:main} part (2) exponentially improves over Lattanzi \etal{}'s algorithm \cite{DBLP:conf/spaa/LattanziMSV11} which requires $O(1/\delta)$ rounds using $O(n^{1+\delta})$ space.

\begin{theorem}\label{thm:sublinearspace}
Given a graph $G$ with $n$ vertices and $m$ edges and max degree $\Delta$, there is an \MPC{} algorithm to compute a maximal matching in $O(\log \log \Delta + \log \log \log n)$ rounds and $n/2^{\Omega(\sqrt{\log n})}$ space per machine. The algorithm succeeds w.e.h.p. and uses total space $O(m + n^{1+\gamma})$ for any constant $\gamma > 0$.
\end{theorem}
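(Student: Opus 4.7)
The plan is to adapt the algorithm of Theorem~\ref{thm:main} part~(1) to use a finer random vertex partition, so that every induced subgraph fits into $n/2^{\Omega(\sqrt{\log n})}$ per-machine memory, combined with a simulation-based fallback that accounts for the additive $O(\log\log\log n)$ rounds.

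First I would re-examine the algorithm of Theorem~\ref{thm:main} part~(1) phase by phase. Each phase edge-samples, randomly partitions the vertex set into $k$ parts, and computes a random greedy maximal matching on each induced subgraph $G_i$. The only step demanding $O(n)$ per-machine memory is hosting a single $G_i$. To shrink this footprint to $n/2^{\Omega(\sqrt{\log n})}$, I would enlarge $k$ by a factor $2^{\Theta(\sqrt{\log n})}$. A Chernoff bound on the within-part degree of each vertex, followed by a union bound over the $n \cdot k$ (vertex, part) pairs, shows that w.e.h.p.\ every $G_i$ has at most $n/2^{\Omega(\sqrt{\log n})}$ edges, provided the current maximum degree is at least $2^{\Omega(\sqrt{\log n})}$. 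Redistributing edges to the right machines adds only $O(1)$ sorting rounds per phase, so this modified procedure still runs in $O(\log\log\Delta)$ rounds; the total-space argument from Theorem~\ref{thm:main} part~(2) also carries over, giving $O(m+n^{1+\gamma})$ in total.

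Once the residual maximum degree falls below $2^{\sqrt{\log n}}$, the concentration above degrades. At this point the residual graph has only $n \cdot 2^{O(\sqrt{\log n})}$ edges in total, so it fits across a small number of machines, and I would finish by simulating the $O(n)$-space algorithm of Theorem~\ref{thm:main} on this residual graph. Since the residual maximum degree is $2^{O(\sqrt{\log n})}$, at most $\log\log(2^{\sqrt{\log n}}) = O(\log\log\log n)$ further phases of the main algorithm are needed, and each phase can be simulated on sublinear-memory machines via standard MPC sorting primitives with $O(1)$ overhead, producing the additive $O(\log\log\log n)$ term.

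The hard part will be the concentration analysis in the first step. Because edges within a part share endpoints they are correlated, and a direct union bound over edges would fail when $k$ is as large as $2^{\Omega(\sqrt{\log n})}$. I plan to sidestep this by bounding the within-part degree of each individual vertex, which under a uniformly random vertex assignment is a sum of independent indicators, via Chernoff, and then summing these degrees to obtain the bound on $|E(G_i)|$. The resulting w.e.h.p.\ guarantee holds precisely when $\Delta \gtrsim 2^{\sqrt{\log n}}$, which is exactly the regime for which this first step is responsible; below it, the fallback in the previous paragraph takes over seamlessly.
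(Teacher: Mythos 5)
Your first stage is essentially the paper's: since each phase of the degree-reduction routine uses only $n/\Delta^{\Omega(1)}$ space per machine, it automatically fits in $n/2^{\Omega(\sqrt{\log n})}$ memory for as long as $\Delta \geq 2^{\Omega(\sqrt{\log n})}$ (note that you do not even need to enlarge $k$; and if you did re-tune $k$ you would have to redo the entire Efron--Stein analysis of Lemma~\ref{lem:nearmaximal}, which is calibrated to $k = \Delta^{0.1}$ and $p = \Delta^{-0.85}$). The genuine gap is in your endgame for the regime $\Delta' \leq 2^{O(\sqrt{\log n})}$. There, each induced subgraph $G_i^L$ still has $\Theta(n/\Delta'^{O(1)})$ edges, which exceeds the per-machine budget $n/2^{\Omega(\sqrt{\log n})}$ precisely because $\Delta'$ is now small, and the greedy maximal matching on $G_i^L$ is an inherently sequential computation that must live on a single machine --- it cannot be ``simulated via standard MPC sorting primitives with $O(1)$ overhead'' across several small machines. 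So the very step that makes the main algorithm work is exactly the step that no longer fits, and your plan offers no replacement for it. (There is also an arithmetic slip: $\log\log\bigl(2^{\sqrt{\log n}}\bigr) = \tfrac{1}{2}\log\log n$, not $O(\log\log\log n)$, so even granting the simulation your accounting of the additive term is off, although this particular issue is absorbed into $O(\log\log\Delta)$.)

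The paper handles this regime completely differently: once $\Delta' \leq 2^{(\gamma/2)\sqrt{\log n}}$, it abandons the vertex-partitioning scheme and instead simulates a known $t = O(\log\Delta' + \operatorname{poly}(\log\log n))$-round \local{} maximal matching algorithm using round compression (the ``blind coordination lemma''), which collects $2^i$-radius neighborhoods --- small because $\Delta'$ is small --- onto machines and thereby runs $t$ \local{} rounds in $O(t/\log_{\Delta'} n + \log t) = O(\log\log\Delta + \log\log\log n)$ \MPC{} rounds with $n^{1-\Omega(1)}$ space. This is also the source of the $O(n^{1+\gamma})$ total-space term (from running $n^{\gamma/2}$ independent copies to amplify the \local{} algorithm's success probability to w.e.h.p.), a term your proposal claims but never accounts for. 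You need some ingredient of this kind for the low-degree endgame; the first stage alone does not suffice.
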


Theorem~\ref{thm:sublinearspace} further improves the space per machine to {\em mildly sublinear} with the same round complexity (ignoring the lower terms). Note that the $n/2^{\Omega(\sqrt{\log n})}$ space usage here goes below the $n/\polylog n$ space that has commonly been considered for problems such as approximate matching \cite{DBLP:conf/stoc/CzumajLMMOS18, DBLP:conf/podc/GhaffariGKMR18, assadiedcs} and graph connectivity \cite{DBLP:journals/corr/abs-1805-02974}.


\paragraph{Other implications.} It is well-known that the set of matched vertices in a maximal matching is a 2-approximation of minimum vertex cover. It is also closely connected to problems of maximum matching.  As such, our results can be applied to these and a few other related problems.

\begin{corollary}\label{cor:vertexcover}
All algorithms of Theorems~\ref{thm:main} and \ref{thm:sublinearspace} can be applied to the 2-approximate minimum vertex cover problem as well.
\end{corollary}

The problem of whether an approximate vertex cover can be found faster in \MPC{} with $O(n)$ space was first asked by Czumaj \etal{}~\cite{DBLP:conf/stoc/CzumajLMMOS18}. Subsequent works showed that indeed $O(\log \log n)$ algorithms are achievable and the approximation factor has been improved from $O(\log n)$ to $O(1)$ to $2+\varepsilon$ \cite{DBLP:journals/corr/abs-1709-04599, DBLP:conf/podc/GhaffariGKMR18, assadiedcs}. Note that, under the Unique Games Conjecture, Corollary~\ref{cor:vertexcover} gives the optimal approximation ratio for polynomial-time algorithms \cite{khot2008vertex}.\footnote{It is a standard assumption in \MPC{} algorithms that each machine should run a polynomial-time algorithm (see \cite{DBLP:conf/soda/KarloffSV10, DBLP:conf/stoc/AndoniNOY14}). If we remove this assumption, it may be possible to improve the approximation ratio for vertex cover.}

\begin{corollary}\label{cor:approximatematching}
For any arbitrary constant $\varepsilon > 0$, Theorem~\ref{thm:main} can be used to give algorithms for $1+\varepsilon$ approximate matching and $2+\varepsilon$ approximate maximum weighted matching in asymptotically the same number of rounds and space.
\end{corollary}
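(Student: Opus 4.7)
The plan is to invoke Theorem~\ref{thm:main} as a black-box subroutine inside two standard reductions: one that boosts a maximal matching to a $(1+\varepsilon)$-approximate maximum matching via augmenting paths, and one that extends the unweighted matching primitive to a $(2+\varepsilon)$-approximate maximum weighted matching via weight bucketing.

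For the $(1+\varepsilon)$ case, I would start with the maximal matching $M_{0}$ produced by Theorem~\ref{thm:main}, which is already a $\tfrac{1}{2}$-approximation. The classical Hopcroft--Karp observation states that if a matching $M$ admits no augmenting path of length at most $2k+1$, then $|M|\geq(1-\tfrac{1}{k+1})|\opt|$, so it suffices to take $k=\Theta(1/\varepsilon)$ and iteratively eliminate short augmenting paths by finding, at each step, a \emph{maximal} vertex-disjoint collection of such paths and applying it. A standard reduction (used throughout the MPC line of work on approximate matching \cite{DBLP:conf/stoc/CzumajLMMOS18, DBLP:conf/podc/GhaffariGKMR18, assadiedcs}) encodes the conflict structure among candidate augmenting paths as an auxiliary graph on which a single maximal matching computation yields a maximal non-conflicting set; iterating this scheme $O(1/\varepsilon)$ times gives the desired approximation. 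For constant $\varepsilon$ this amounts to $O(1)$ invocations of Theorem~\ref{thm:main}, preserving the $O(\log\log\Delta)$ rounds and $O(n)$ space per machine.

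For the $(2+\varepsilon)$ weighted case, I would apply the Lotker--Patt-Shamir--Pettie / Crouch--Stubbs weight bucketing. Letting $w_{\max}$ denote the maximum edge weight, first discard every edge of weight below $\varepsilon w_{\max}/n$ (these collectively contribute at most $\varepsilon\cdot\opt$). Partition the surviving edges into $O(\varepsilon^{-1}\log n)$ buckets of geometrically increasing weight, and \emph{in parallel} across buckets compute a maximal matching in each via Theorem~\ref{thm:main}. A single greedy post-processing pass---scanning buckets from heaviest to lightest and keeping each selected edge that does not conflict with a previously retained one---yields a $(2+\varepsilon)$-approximation by the standard analysis. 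Since the bucket computations run in parallel, the round complexity matches that of a single invocation of Theorem~\ref{thm:main}.

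The main thing to verify, and the step I expect to require the most care, is the space accounting. For the augmenting-path reduction one must check that the auxiliary graph of length-$O(1/\varepsilon)$ alternating structures fits in $O(n)$ memory per machine and $O(m)$ total; this follows because each vertex participates in only $\poly(1/\varepsilon)$ candidate short paths before conflict resolution. For the weighted reduction one must check that the $O(\log n)$ parallel maximal-matching instances do not violate the total space bound---which is true because the buckets partition $E$, so $\sum_i|E_i|=m$, and the $O(\log n)$ parallel copies can be spread across disjoint machine ranges. With these verifications the round and space guarantees of Theorem~\ref{thm:main} (and, analogously, Theorem~\ref{thm:sublinearspace}) transfer to both approximate-matching and approximate-weighted-matching problems, as claimed.
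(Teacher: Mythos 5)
Your unweighted half is essentially the route the paper takes: the paper proves this corollary by citing McGregor's reduction, which is exactly the Hopcroft--Karp-style scheme you describe (repeatedly eliminating length-$O(1/\varepsilon)$ augmenting paths, each phase implemented via the maximal-matching primitive). One quantitative caveat: a single round of applying a maximal disjoint set of short augmenting paths does not eliminate all short augmenting paths, and McGregor's analysis in fact needs $(1/\varepsilon)^{O(1/\varepsilon)}$ phases rather than your claimed $O(1/\varepsilon)$; this is harmless here only because $\varepsilon$ is a constant, so the number of black-box invocations is still $O(1)$.

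The weighted half has a genuine gap. The geometric weight-bucketing plus heaviest-first greedy merge that you propose is the Crouch--Stubbs scheme, and its standard analysis yields roughly a $4+\varepsilon$ (with refinements, about $3.5+\varepsilon$) approximation, not $2+\varepsilon$; there are known instances where the merge of per-bucket matchings loses more than a factor of $2$, so no amount of care in the space accounting will rescue the claimed ratio. The paper instead invokes the algorithm of Lotker, Patt-Shamir and Pettie (as adapted by Czumaj et al.), which reaches $2+\varepsilon$ by a different mechanism: it runs $O_\varepsilon(1)$ \emph{sequential} augmentation phases, in each phase building an auxiliary unweighted ``gain'' graph from the current weighted matching and using the unweighted matching subroutine to find a good augmenting set. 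For constant $\varepsilon$ this is again $O(1)$ invocations of Theorem~\ref{thm:main}, preserving rounds and space. A secondary issue with your version: the final greedy merge over $O(\varepsilon^{-1}\log n)$ buckets either takes $\Omega(\log n)$ sequential rounds or requires collecting $O(\varepsilon^{-1} n\log n)$ matched edges on one machine, which exceeds the $O(n)$ per-machine space; the iterative Lotker et al.\ reduction avoids this.
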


The reduction from maximal matching (and in fact, any $O(1)$ approximate matching) to $1+\varepsilon$ approximate matching is due to McGregor \cite{DBLP:conf/approx/McGregor05} (see \cite{assadiedcs}) and the reduction to $2+\varepsilon$ approximate weighted matching is due to  Lotker et al.~\cite{DBLP:journals/siamcomp/LotkerPR09} (see \cite{DBLP:conf/stoc/CzumajLMMOS18}). We also note that if the space is $\tilde O(n)$, then our algorithm can be used in a framework of Gamlath \etal{}~\cite{gamlath2018weighted} to get an $O(\log \log \Delta)$ round algorithm for $1+\epsilon$ approximate maximum weighted matching. Corollary~\ref{cor:approximatematching} also strengthens the round-complexity of the results in \cite{DBLP:conf/stoc/CzumajLMMOS18, DBLP:conf/podc/GhaffariGKMR18, assadiedcs} from $O(\log \log n)$ to $O(\log \log \Delta)$ using $O(n)$ space.\footnote{The algorithms of \cite{DBLP:conf/stoc/CzumajLMMOS18, DBLP:conf/podc/GhaffariGKMR18, assadiedcs} appear to still require $\Omega(\log \log n)$ rounds even for $\Delta = \poly\log n$ since they switch to an $O(\log \Delta)$ round algorithm at this threshold. Corollary~\ref{cor:approximatematching}, however, gives an $O(\log \log \log n)$ round algorithm on such graphs.}

Finally, there is a close connection between the MPC model and congested clique model \cite{DBLP:journals/corr/abs-1802-10297, DBLP:conf/podc/Lenzen13}, leading to the following results:
\begin{corollary}\label{cor:congestedclique}
Theorem~\ref{thm:main} directly gives an $O(\log \log \Delta)$ round algorithm for maximal matching in the congested clique model. It also leads to $O(\log \log \Delta)$ round congested clique algorithms for 2-approximate vertex cover, $1+ \varepsilon$ approximate maximum matching, and $2 + \varepsilon$ approximate maximum weighted matching by known reductions.
\end{corollary}

The problem of maximal matching in the congested clique model was first posed by Ghaffari~\cite{DBLP:conf/podc/Ghaffari17}, with a $\widetilde{O}(\sqrt{\log \Delta})$-round algorithm appearing in \citet*{ghaffariuitto}. Corollary~\ref{cor:congestedclique} exponentially improves over this bound.

\smparagraph{Recent developments.} After the conference version of the present paper, \citet*{AssadiLT21} used our result to give an $O(1/\epsilon^2 \cdot \log \log n)$-round MPC algorithm for $1 + \epsilon$ approximate matching in bipartite graphs, and similarly \citet*{fischer2022deterministic} used it to give an $O(\poly(1/\epsilon) \cdot \log \log n )$-round MPC algorithm for $1 + \epsilon$ approximate matching in general graphs. These improve the $\epsilon$-dependency of the previous $(1/\epsilon)^{O(1/\epsilon)} \log \log n$ round algorithm for general graphs based on McGregor's reduction \cite{DBLP:conf/approx/McGregor05}. Note that, for this purpose, it is essential to obtain a maximal matching as in Theorem~\ref{thm:main}.

\subsection{High Level Technical Overview}\label{sec:highlevel}
As discussed above, if the space per machine is $n^{1+\Omega(1)}$, then we already know how to find a maximal matching efficiently \cite{DBLP:conf/spaa/LattanziMSV11}. Such previous algorithms use ideas such as edge-sampling the graph into a \emph{single} machine \cite{DBLP:conf/spaa/LattanziMSV11,DBLP:conf/spaa/BehnezhadDETY17,DBLP:conf/spaa/AhnG15}, and these require $\Omega(\log n)$ rounds if the space is $O(n)$. Vertex partitioning \cite{DBLP:conf/soda/KarloffSV10, DBLP:conf/nips/BateniBDHKLM17, DBLP:conf/stoc/CzumajLMMOS18, assadiedcs, DBLP:conf/podc/GhaffariGKMR18}, which in the context of matching was first used by \cite{DBLP:conf/stoc/CzumajLMMOS18}, helps to make use of multiple machines. The general idea is to randomly partition the vertices and each machine can separately find a matching in the induced subgraph of a partition.

Algorithms in this framework may make different choices for the internal matching algorithm on these induced subgraphs \cite{DBLP:conf/stoc/CzumajLMMOS18, assadiedcs, DBLP:conf/podc/GhaffariGKMR18}. \emph{Greedy maximal matching} is one of the simplest matching algorithms: it iterates over each edge according to some given ordering $\pi$, and adds it to the matching if none of its incident edges are part of the matching so far. In other words, it is the lexicographically-first MIS of the line graph of $G$. This matching turns out to have several desirable structural properties that make it a perfect candidate for this purpose. 

\smparagraph{The algorithm.} Our main algorithm, which is formalized as Algorithm~\ref{alg:nearmaximal}, uses three randomization steps, all of which are necessary for the analysis: 
\begin{itemize}[topsep=5pt,itemsep=0ex,partopsep=0ex,parsep=1ex]
	\item An ordering $\pi$ over the edges is chosen uniformly at random.
	\item Each edge of the graph is sampled independently with some probability $p$.
	\item The vertex set $V$ is randomly partitioned into disjoint subsets $V_1, \ldots, V_k$.
\end{itemize}
After these steps, we put the edge-sampled induced subgraph of each $V_i$ into machine $i$ and compute a greedy maximal matching $M_i$ according to ordering $\pi$. The parameters $p,k$ are chosen to ensure that the induced subgraphs fit the memory of a machine. Observe that $M = \bigcup_{i\in[k]} M_i$ is a valid matching since the partitions are vertex-disjoint. 

\smparagraph{The analysis outline.} The key to our results, and the technical core of our paper, is to show that if we commit $M$ to the final maximal matching, then the degree of most vertices drops to $\Delta^{1-\Omega(1)}$ in the residual graph. 

For a vertex $v$ and partition $i \in [k]$,  let $Z_{v, i}$ denote the number of neighbors of $v$ in partition $i$ that are not matched in greedy matching $M_i$. So either $v$ is matched or its remaining degree in the residual graph becomes $\sum_{i \in [k]} Z_{v, i}$. Our degree reduction guarantee boils down to showing a concentration bound on the random variable $Z_{v, i}$.

Let us first outline how a concentration bound on $Z_{v, i}$ can be helpful. Suppose that, wishfully thinking, we have $Z_{v, i} = (1 \pm o(1)) \E[Z_{v, i}]$ for all $i$. By symmetry of the partitions,  $\E[Z_{v, i}] = \E[Z_{v, 1}]$ for each $i$. This means that all random variables $Z_{v, 1}, \ldots, Z_{v, k}$ take on the same values ignoring the lower terms. If $\E[Z_{v, 1}]$ is small enough that $k \cdot \E[Z_{v, 1}] < \Delta^{1-\Omega(1)}$, this gives the desired bound on the residual degree of $v$. Otherwise, if $\E[Z_{v,1}]$ is large, then $v$ would have many edges available in its own partition; the greedy matching would then be likely to choose one of these edges and so $v$ would itself be matched.

Unfortunately, there are two severe roadblocks to showing such concentration bounds. First, $Z_{v, i}$ is a rather complicated function of the underlying random variables. In general, concentration bounds obtained by Azuma's or other ``dimension dependent'' inequalities are not suitable for our purposes, as these would give bounds on the order of $Z_{v, i} = \E[Z_{v, i}] \pm \widetilde{O}(\sqrt{n})$, which is useless when $\Delta$ is much smaller than $n$. Recall that Chernoff-Hoeffding bounds, which are dimension-independent, apply to sums of independent random variables.  

The second roadblock is that we cannot even bound the variance of $Z_{v,i}$ for an arbitrary vertex $v$. The bounds we will develop are based on analysis of a random query process, for which we can only bound the \emph{average} over all vertices $v$. 

To handle both of these issues, we focus on an easier goal: instead of showing that the \emph{maximum degree} of the residual graph is reduced, we only show that the \emph{average degree} is reduced. This allows us to average out both types of fluctuation (within a vertex and across vertices). For this, instead of an exponential concentration bound, it suffices to use a weaker bound in terms of the variance of $Z_{v, i}$. We use a method known as the Efron-Stein inequality (see Proposition~\ref{prop:efronstein}); this is not among the standard tools used in theoretical computer science, and we hope this example shows that it can be useful in the analysis of randomized algorithms.

\smparagraph{How greedy maximal matching helps.} Our proof relies on a number of unique properties of the random greedy maximal matching algorithm:
\begin{enumerate}[topsep=8pt,itemsep=1ex,partopsep=0ex,parsep=1ex]
\item If we run greedy maximal matching on an edge-sampled subgraph of a graph, the maximum degree in the residual graph drops significantly.
\item The set of matched vertices in the greedy maximal matching changes by a constant number of elements if a single vertex or edge is modified.
\item If an edge ordering is chosen randomly, then determining whether an edge $e$ belongs to the greedy maximal matching requires ``looking back'' at just $O(d)$ other edges on average, where $d$ is the average degree of the line graph.
\end{enumerate}
We formalize these properties in Section~\ref{sec:greedy}. Note that Property 1 was the only property used in the maximal matching algorithm of Lattanzi \etal~\cite{DBLP:conf/spaa/LattanziMSV11}. Property 3 was originally developed in the context of {\em sublinear time algorithms} for approximate maximum matching. To our knowledge, it was first formalized by Nguyen and Onak~\cite{DBLP:conf/focs/NguyenO08}, with a precise bound we use shown by Yoshida \etal{}~\cite{DBLP:conf/stoc/YoshidaYI09}. It is remarkable that this methodology can be applied to concentration bounds.

\section{Preliminaries}

\subsection{Notation} 
For any integer $k$, we let $[k]$ denote the set $\{1, \ldots, k\}$. For a graph $G = (V, E)$ and vertex set $V' \subseteq V$, we let $G[V']$ denote the induced subgraph on $V'$. For a vertex $v$, we define the \emph{neighborhood} $N(v)$ to be the set of vertices $u$ with $\{u,v\} \in E$.

An edge subset $M \subseteq E$ is a {\em matching} if no two edges in $M$ share an endpoint. A matching $M$ of a graph $G$ is a {\em maximal matching} if it is not possible to add any other edge of $G$ to $M$. When it is clear from the context, we abuse notation to use $M$ for the vertex set of matching $M$; in particular, we write $G[V \setminus M]$ for the graph obtained by removing every vertex of $M$ from $G$. 

For any vertex $v \in V$ and a matching $M$, we define the \emph{residual degree} $\res_M(v)$ to be zero if $v \in M$, and otherwise $\res_M(v) = \deg_{G[ V \setminus M ]}(v)$. Finally, we define the \emph{match-status} of vertex $v$ according to matching $M$ to be the indicator for the event that $v \in M$.

\subsection{The \MPC{} Model}\label{sec:mpc}
The Massively Parallel Computations (\MPC{}) model was first introduced by Karloff \etal{}~\cite{DBLP:conf/soda/KarloffSV10} and further refined by \cite{DBLP:conf/isaac/GoodrichSZ11,DBLP:journals/jacm/BeameKS17, DBLP:journals/jacm/BeameKS17, DBLP:conf/stoc/AndoniNOY14}. An input of size $N$ is initially distributed among $M$ machines, each with a local space of size $S$. Computation proceeds in synchronous rounds in which each machine can perform an arbitrary local computation on its data and can send messages to other machines. The messages are delivered at the start of the next round. Furthermore, the total messages sent or received by each machine in each round should not exceed its memory.

We desire algorithms that use a sublinear space per machine (i.e., $S = N^{1-\Omega(1)}$) and only enough total space to store the input (i.e., $S \cdot M = O(N)$). For graph problems, the edges of an input graph $G = (V, E)$ with $n := |V|$ and $m := |E|$ are initially distributed arbitrarily among the machines, meaning that $N = \Theta(m)$ words (or $\Theta(m \log n)$ bits). We mainly consider the regime of \MPC{} with space per machine of $S = \Theta(n)$ words.

\subsection{Concentration inequalities}

We will use two main concentration inequalities: the \emph{Efron-Stein} inequality and the \emph{bounded differences inequality}. These both concern functions with certain types of Lipschitz properties.

\begin{proposition}[Efron-Stein inequality \cite{steele1986efron}]\label{prop:efronstein}
Fix an arbitrary function $f: \{0, 1\}^n \to \mathbb{R}$ and let $X_1, \ldots, X_n$ and $X'_1, \ldots, X'_n$ be $2n$ i.i.d. Bernoulli random variables. For $\vec{X} := (X_1, \ldots, X_n)$ and $\vec{X}^{(i)} := (X_1, \ldots, X_{i-1}, X'_i, X_{i+1}, \ldots, X_n)$, we have
$$\var(f(\vec{X})) \leq \frac{1}{2} \cdot \E \Big[ \sum_{i=1}^n \big(f(\vec{X}) - f(\vec{X}^{(i)})\big)^2\Big].$$
\end{proposition}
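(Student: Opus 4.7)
The plan is to prove the Efron-Stein inequality through the classical martingale-decomposition argument, which is the cleanest route. I would first set $f_i := \E[f(\vec{X}) \mid X_1, \ldots, X_i]$ for $i = 0, 1, \ldots, n$, so that $f_0 = \E[f(\vec{X})]$ and $f_n = f(\vec{X})$. Writing $D_i := f_i - f_{i-1}$, these are martingale differences with respect to the filtration generated by $X_1, \ldots, X_i$, so $\E[D_i \mid X_1, \ldots, X_{i-1}] = 0$. Since martingale differences are orthogonal in $L^2$, I would telescope to obtain
\[
\var(f(\vec{X})) = \E\Bigl[\bigl(\textstyle\sum_{i=1}^n D_i\bigr)^2\Bigr] = \sum_{i=1}^n \E[D_i^2].
\]

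Next, I would rewrite each term as a conditional variance. Because $D_i = f_i - \E[f_i \mid X_1, \ldots, X_{i-1}]$, we have $\E[D_i^2 \mid X_1, \ldots, X_{i-1}] = \var(f_i \mid X_1, \ldots, X_{i-1})$, and conditional on $X_1, \ldots, X_{i-1}$ only the coordinate $X_i$ is random inside $f_i$. Using the elementary identity $\var(Y) = \tfrac{1}{2}\E[(Y - Y')^2]$ for an i.i.d.\ copy $Y'$ of $Y$, applied conditionally with $X_i'$ playing the role of the fresh copy of $X_i$, I get
\[
\E[D_i^2] = \tfrac{1}{2}\, \E\bigl[(f_i(X_1,\ldots,X_i) - f_i(X_1,\ldots,X_{i-1},X_i'))^2\bigr],
\]
where I slightly abuse notation by writing $f_i$ as a function of its first $i$ arguments.

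The final step, which I expect to be the only subtle point, is to pass from $f_i$ (a conditional expectation) back to the original function $f$. Observe that
\[
f_i(X_1,\ldots,X_i) - f_i(X_1,\ldots,X_{i-1},X_i') = \E\bigl[f(\vec{X}) - f(\vec{X}^{(i)}) \,\big|\, X_1,\ldots,X_i,X_i'\bigr],
\]
where the inner expectation is over the common tail $X_{i+1},\ldots,X_n$ used in both copies. Jensen's inequality applied to the convex map $t \mapsto t^2$ then gives
\[
(f_i(X_1,\ldots,X_i) - f_i(X_1,\ldots,X_{i-1},X_i'))^2 \le \E\bigl[(f(\vec{X}) - f(\vec{X}^{(i)}))^2 \,\big|\, X_1,\ldots,X_i,X_i'\bigr].
\]
Taking total expectations and summing over $i$ collapses the nested conditioning and yields the claimed bound $\var(f(\vec{X})) \le \tfrac{1}{2}\sum_i \E[(f(\vec{X}) - f(\vec{X}^{(i)}))^2]$.

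The main obstacle is really just the Jensen step: one has to be careful to couple $\vec{X}$ and $\vec{X}^{(i)}$ on the \emph{same} values of the other coordinates so that the difference of conditional expectations becomes a conditional expectation of the difference. Everything else (orthogonality of martingale differences and the $\var(Y)=\tfrac{1}{2}\E[(Y-Y')^2]$ identity for i.i.d.\ copies) is standard. Since the proposition is a direct quotation of Steele's 1986 theorem with no additional structure imposed by the paper, I would also be comfortable stating it without proof and simply citing \cite{steele1986efron}; the sketch above is the argument I would include if a self-contained proof were required.
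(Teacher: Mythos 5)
Your proof is correct. The paper itself gives no proof of this proposition: it is stated as a known external result and simply cited to Steele (1986), so there is nothing to compare against. Your martingale-decomposition argument (Doob martingale, orthogonality of the differences $D_i$, the identity $\var(Y)=\tfrac{1}{2}\E[(Y-Y')^2]$ applied conditionally, and Jensen after coupling $\vec{X}$ and $\vec{X}^{(i)}$ on the shared tail $X_{i+1},\ldots,X_n$) is the standard proof and all steps, including the one subtle exchange of difference and conditional expectation, are handled correctly; citing the result without proof, as the paper does, would also be entirely acceptable.
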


We use the following form of the bounded differences inequality (which is a special case of McDiarmid's inequality):

\begin{proposition}[Bounded differences inequality]\label{prop:bd-dif}
Let $f : \mathcal{X}_1 \times \ldots \times \mathcal{X}_k \to \mathbb{R}$ be a function on $k$ variables, such that changing any coordinate $x_i$ changes the function value $f(x_1, \dots, x_k)$ by at most $\lambda$. For a vector $\vec X = (X_1, \dots, X_k)$ of $k$ independent (not necessarily identically distributed or binary) random variables and for any $t \geq 0$ we have
$$
\Pr \big(f(\vec{X}) \geq \E[f(\vec{X})]  + t \big) \leq \exp\Big(\frac{-2 t^2}{k \lambda^2}\Big).
$$

In particular, w.e.h.p. there holds $
f(\vec{X}) \leq \E[f(\vec{X})] + \lambda n^{0.01} \sqrt{k}.
$ 
\end{proposition}

\subsection{Sequential Greedy Maximal Matching}
\label{sec:greedy}
As described in Section~\ref{sec:highlevel}, a maximal matching can be found by a sequential greedy algorithm:
\begin{definition}[Greedy maximal matching]
	Given a graph $G = (V, E)$ and an ordering $\pi$ over the edges $E$, the greedy maximal matching algorithm processes the edges in the order of $\pi$ and adds each edge $e$ to the matching if none of its incident edges have joined the matching so far. We denote the resulting maximal matching by $\greedymatching{G, \pi}$.
\end{definition}

We say that $e$ has \emph{higher priority than $e'$} if $\pi(e) < \pi(e')$, that is, $e$ is processed before $e'$. It is often convenient in this context to generate the permutation $\pi$ by choosing a random function $\rho: E \rightarrow [0,1]$, and then sorting in order of $\rho$. We write $\greedymatching{G, \rho}$ in this case as shorthand for $\greedymatching{G, \pi}$ where $\pi$ is the permutation associated to $\rho$.

This greedy maximal matching has a number of nice properties that play a critical role in the analysis of our algorithm; proofs appear in Appendix~\ref{app:greedy}. 

The first property is that runing the greedy matching on an edge-sampled subgraph will significantly reduce the degree of the residual graph. 

\begin{lemma}\label{lem:edgesamplegreedy}
Fix a graph $G = (V, E)$ and a permutation $\pi$ over $E$. Suppose we form an edge-set $L \subseteq E$ by sampling each edge with some given probability $p \in (0,1]$. Let $G' = (V, L)$ be the resulting subgraph and let $M := \greedymatching{G', \pi}$. Then for any vertex $v$ and any parameter $\beta > 0$, we have
$
\Pr \bigl( \res_M(v) \geq \beta/p \bigr) \leq e^{-\beta}.
$
\end{lemma}

The second useful property is that modifying a single vertex or edge of $G$ does not change the set of matched vertices too much. Note that the set of \emph{edges} in the matching can change significantly.

\begin{lemma}\label{lem:lipschitzproperties}
	Fix a graph $G=(V,E)$ and a mapping  $\rho: E \rightarrow [0,1]$.
	\begin{enumerate}[itemsep=-0.5ex]
		\item If graph $G'$ is derived by removing a vertex of $G$, then there are at most two vertices whose match-status differs in $\greedymatching{G, \rho}$ and $\greedymatching{G', \rho}$.
		\item If graph $G'$ is derived by removing an edge of $G$, then there are at most two vertices whose match-status differs in $\greedymatching{G, \rho}$ and $\greedymatching{G', \rho}$.
		\item If $\rho'$ is derived by changing a single entry of $\rho$, then there are at most two vertices whose match-status differs in $\greedymatching{G, \rho}$ and $\greedymatching{G, \rho'}$.
	\end{enumerate}
\end{lemma}

The third property can be summarized as stating that the presence of any given edge $e$ appearing in $M$ can be determined from a relatively small number of other edges. More precisely, we consider the following query-based method which we refer to as the ``edge oracle'' $\edgeoracle_{\pi}(e)$ to determine whether $e$ appears in $\greedymatching{G, \pi}$:

\begin{titledtbox}{$\edgeoracle_{\pi}(e)$: A query-process to determine whether  $e \in \greedymatching{G, \pi}$.}
	\begin{algorithmic}
		\State Let $e_1, \ldots, e_d$ be the incident edges to $e$ in $G$ sorted such that $\pi(e_1) < \pi(e_2) < \dots < \pi(e_d)$.
		\For{$i = 1, \ldots, d$}
			\If{$\pi(e_i)  < \pi(e)$}
			\State \textbf{if} $\edgeoracle_{\pi}(e_j) = \yes$ \textbf{then return} \no{}
			\EndIf
		\EndFor
		\State \Return \yes{}
	\end{algorithmic}
\end{titledtbox}
	
It is clear that $e \in \greedymatching{G, \pi}$ if and only if $\edgeoracle_{\pi}(e) = \yes{}$.   Translating a result of  Yoshida \etal{}~\cite{DBLP:conf/stoc/YoshidaYI09} for maximal independent set into our context gives:
\begin{proposition}[\cite{DBLP:conf/stoc/YoshidaYI09}]
\label{prop:greedmatchquery}
Fix a graph $G = (V,E)$ with $m$ edges and $r$ pairs of intersecting edges. For each edge $e \in E$, let $A(e)$ be the number of (recursive) calls to $\edgeoracle_{\pi}$ generated by running $\edgeoracle_{\pi}(e)$, including the original call to $\edgeoracle_{\pi}(e)$ itself. If $\pi$ is drawn uniformly at random from permutations on $m$ elements, then $\E_{\pi} [ \sum_{e \in E} A(e) ] \leq m + r$.
\end{proposition}


\section{Roadmap}
\label{sec:roadmap}
As discussed in Section~\ref{sec:highlevel}, the key to proving Theorem~\ref{thm:main} and Theorem~\ref{thm:sublinearspace} is an algorithm to reduce the graph degree by a polynomial factor. The precise statement of this lemma is as follows:

\begin{lemma}[degree reduction]
\label{lem:degreereduction}
There is an $O(1)$ round \MPC{} algorithm to produce a matching $M$, with the following behavior w.e.h.p.: it uses $O(n/\Delta^{\Omega(1)})$ space per machine and $O(m + n)$ space in total, and the residual graph $G[V \setminus M]$ has maximum degree $\Delta^{1-\Omega(1)}$.
\end{lemma}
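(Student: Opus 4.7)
The plan is to instantiate the sample-partition-greedy scheme outlined in Section~\ref{sec:highlevel} with tuned parameters, and then reduce the residual-degree claim to a concentration statement about per-partition unmatched-neighbor counts. Concretely, for a small constant $\alpha>0$ to be tuned, set $p := \Delta^{-\alpha}$ and pick the number of partitions $k$ as a small polynomial in $\Delta$ so that the expected load $n/k + mp/k^2$ on each machine is $n/\Delta^{\Omega(1)}$. Choose a uniformly random edge-ordering $\pi$, independently keep each edge with probability $p$ to form the subgraph $H$, and independently assign each vertex a uniform partition index in $[k]$. Machine $i$ receives $H[V_i]$ and computes $M_i := \greedymatching{H[V_i], \pi}$; the algorithm outputs $M := \bigcup_{i \in [k]} M_i$, which is a matching because the $V_i$ are vertex-disjoint. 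A Chernoff bound concentrates each machine's load at $n/\Delta^{\Omega(1)}$, and routing the partition labels, sampled edges, and ordering costs $O(1)$ rounds with $O(m)$ total space.

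Fix a vertex $v$, and for each partition $i$ let $Z_{v,i}$ be the number of $G$-neighbors of $v$ lying in $V_i$ that are unmatched by $M_i$; then the residual degree of $v$ equals $\sum_i Z_{v,i}$ whenever $v \notin M$. By symmetry, $\mu := \E[Z_{v,i}]$ does not depend on $i$, and I would split into two cases. If $k\mu \leq \Delta^{1-\Omega(1)}$, a sufficiently sharp concentration bound on each $Z_{v,i}$ combined with a union bound over the $k$ partitions yields $\sum_i Z_{v,i} \leq 2 k \mu \leq \Delta^{1-\Omega(1)}$ w.e.h.p. Otherwise $k\mu > \Delta^{1-\Omega(1)}$, so $\mu$ is large; conditioning on $v$'s own partition $i(v)$, concentration forces $Z_{v,i(v)} \geq \mu/2$ w.e.h.p., meaning $v$ has many $G$-neighbors inside $V_{i(v)}$ that $M_{i(v)}$ leaves unmatched. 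Since each $G$-edge survives into $H$ independently with probability $p$, at least one such edge is present in $H$ w.e.h.p., and maximality of the greedy matching $M_{i(v)}$ on $H[V_{i(v)}]$ then forces $v \in M_{i(v)}$, so $v$'s residual degree is $0$. A union bound over the $n$ vertices finishes the lemma.

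The main obstacle is establishing the concentration of $Z_{v,i}$ at the scale required above. Standard Azuma or McDiarmid bounds are hopeless, as the partition labels of essentially all $n$ vertices feed into $Z_{v,i}$ and yield only a useless dimension-dependent $\sqrt{n}$ deviation. Following the outline in Section~\ref{sec:highlevel}, I would bound $\var(Z_{v,i})$ via the Efron-Stein inequality (Proposition~\ref{prop:efronstein}), treating the edge-sample bits and per-vertex partition labels as the independent coordinates. The Lipschitz properties of greedy matching (Lemma~\ref{lem:lipschitzproperties}) make each Efron-Stein term $O(1)$ deterministically, since resampling one coordinate alters the matched set of $M_i$ by only $O(1)$ vertices. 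The non-trivial piece is to argue that only $\Ot{\mu}$ coordinates have any nonzero expected contribution: here I would invoke property 3 of greedy matching (Proposition~\ref{prop:greedmatchquery}), which says that whether a particular edge lies in the greedy matching depends in expectation on a neighborhood of size proportional to the average line-graph degree of $H[V_i]$, which is itself kept small by the edge-sampling lemma (Lemma~\ref{lem:edgesamplegreedy}). Combining these yields $\var(Z_{v,i}) \leq \Ot{\mu}$, and Chebyshev---bootstrapped into an exponential tail by exploiting near-independence across partitions---provides the concentration used above. I expect the delicate bookkeeping to lie in the Efron-Stein calculation itself and in calibrating $\alpha$ and $k$ so that both cases simultaneously produce a genuine polynomial $\Delta^{1-\Omega(1)}$ saving.
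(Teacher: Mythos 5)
Your algorithm and the variance-via-Efron--Stein idea match the paper's, but the concentration you claim for $Z_{v,i}$ is quantitatively unachievable by these tools, and this breaks the final union bound. The Efron--Stein bound feeds into Chebyshev, which yields a per-vertex failure probability of only $1/\poly(\Delta)$ --- not exponentially small, not even $1/\poly(n)$ when $\Delta \ll n$. Your proposed fix, ``bootstrapped into an exponential tail by exploiting near-independence across partitions,'' does not work: the $Z_{v,1},\dots,Z_{v,k}$ all depend on the shared $\pi$, $L$, and the full partition $\chi$, and even under full independence summing $k=\poly(\Delta)$ variables with variance $\poly(\Delta)$ only improves a Chebyshev tail polynomially, never exponentially. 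Moreover, in the large-$\mu$ case you need a \emph{lower} tail for the single variable $Z_{v,i(v)}$ attached to $v$'s own partition, where there is no averaging over partitions to exploit at all. A second, related gap: the Yoshida-et-al.\ query-complexity bound (Proposition~\ref{prop:greedmatchquery}) is an \emph{average} over edges, so the resulting Efron--Stein bound $\var(Z_{v,i}) \leq \poly(\Delta)$ holds only for most vertices; an expected $n/\poly(\Delta)$ ``bad'' vertices admit no variance bound whatsoever. So a uniform per-vertex w.e.h.p.\ statement followed by a union bound over all $n$ vertices cannot be the endgame.

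The paper's proof accepts these limitations and restructures the conclusion accordingly: the sample-partition-greedy step is only shown to leave an \emph{expected} $O(n/\Delta^{0.03})$ vertices of residual degree above $\Delta^{0.99}$ (Lemma~\ref{lem:nearmaximal}), splitting vertices into bad ones (charged to the expectation via Claim~\ref{cl:fewbads}), low-mean good ones (Markov), and high-mean good ones (Chebyshev plus the intra-partition argument of Claim~\ref{cl:intrapartitiondegree}). This expectation bound is then amplified to w.e.h.p.\ by parallel repetition or the bounded differences inequality (Claim~\ref{cl:alg1whp}), and --- crucially --- a \emph{second stage} (Algorithm~\ref{alg:leftoververtices} and the final step of the proof of Lemma~\ref{lem:degreereduction}) sweeps up the surviving high-degree vertices by edge-sampling their incident edges onto a single machine and finishing greedily there. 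Your proposal is missing this entire second half, and without it the lemma's w.e.h.p.\ maximum-degree guarantee cannot be reached from the concentration that Efron--Stein plus Chebyshev actually delivers.
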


The degree reduction algorithm of Lemma~\ref{lem:degreereduction} can immediately be used to prove Theorem~\ref{thm:main}.

\begin{proof}[Proof of Theorem~\ref{thm:main}, assuming Lemma~\ref{lem:degreereduction}]
The algorithm consists of $r$ iterations that each commits some edges to the final maximal matching using Lemma~\ref{lem:degreereduction}. Let $\Delta_i$ denote the maximum degree after $i$ iterations. In each iteration $i$, we reduce $\Delta_i$ to $\Delta_{i+1} \leq \Delta_i^{1-\alpha}$ for some constant $\alpha$.  So $\Delta_r \leq \Delta^{(1-\alpha)^{r}}$.  In particular, for $r = O( \log \log \Delta)$, the residual graph at the end has degree $O(1)$; we can store it on a single machine with $O(n)$ space, and compute its maximal matching. Similarly, for $r = \Theta( \log (1/\delta) )$, the residual graph at the end has degree $n^{\delta}$, and we can compute its maximal matching on a single machine with $O(n^{1+\delta})$ space.
\end{proof}

By combining our algorithm with a known technique for simulating \local{} algorithms, we also immediately get Theorem~\ref{thm:sublinearspace} which reduces the space per machine to $n/2^{\Omega(\sqrt{\log n})}$. 
\begin{proof}[Proof of Theorem~\ref{thm:sublinearspace}, assuming Lemma~\ref{lem:degreereduction}]
Let $\gamma$ be a given arbitrary constant.  For $\Delta \geq 2^{\sqrt{\log n}}$, the degree reduction algorithm of Lemma~\ref{lem:degreereduction} uses a space per machine of $O(n/\Delta^{\Omega(1)}) \leq n/2^{\Omega(\sqrt{\log n})}$. If we apply it for $O(\log \log \Delta)$ iterations, we can reduce to a residual graph $G'$ with maximum degree $\Delta' \leq 2^{ \sqrt{\log n}}$ w.e.h.p. while using $n/2^{\Omega(\sqrt{\log n})}$ space per machine.  (If $\Delta \leq 2^{\sqrt{\log n}}$ originally, then simply set $G' = G$.)
	
	At this point, we switch to a different algorithm: we simulate the \local{} maximal matching algorithm  \cite{DBLP:conf/focs/BarenboimEPS12} which has round complexity $t = O(\log \Delta' + \polyloglog n)$ on $G'$ and has success probability $1 - 1/\poly(n)$. In the language of \cite{DBLP:journals/corr/abs-1807-06701}, this is a \emph{state-congested \local{} algorithm}; the {\em blind coordination lemma} of \cite{DBLP:journals/corr/abs-1807-06701} states that such an algorithm can be simulated in the MPC model in $O(\frac{t}{\log_{\Delta'} n} + \log t) = O(\log \log \Delta + \log \log \log n)$ rounds and $n^{1 - \Omega(1)}$ space per machine and $n^{1 + \gamma/2}$ total space,  exclusive of the space needed to store $G'$ itself.
	
	 To amplify the success probability to w.e.h.p., we run $n^{\gamma/2}$ separate independent executions in parallel. This brings the total space (aside from the storage of $G'$) up to $n^{1+\gamma}$.
\end{proof}

The core of our analysis in proving Lemma~\ref{lem:degreereduction} lies in showing that the following Algorithm~\ref{alg:nearmaximal} significantly reduces the number of edges of $G$.  Throughout, we define parameters:
$$
p := \Delta^{-0.77}, \qquad \qquad k: = \Delta^{0.12}
$$

\begin{tboxalg}{}\label{alg:nearmaximal}
	\textbf{Input:} A graph $G = (V, E)$ with maximum degree $\Delta$.
	
	\textbf{Output:} A matching $M$ in $G$.
	
	\begin{enumerate}[label={(\arabic*)},ref={\arabic*}, topsep=5pt,itemsep=-0.3ex,partopsep=0ex,parsep=1ex, leftmargin=*]
		\item \textbf{Permutation:} Choose a permutation $\pi$ uniformly at random over the edges in $E$.\label{line:permutation}
		\item \textbf{Edge-sampling:} Let $G^L = (V, L)$ be an edge-sampled subgraph of $G$ where each edge in $E$ is sampled independently with probability $p$. \label{line:edgesampling}
		\item \textbf{Vertex partitioning:} Choose a function $\chi: V \rightarrow [k]$ uniformly at random and form associated vertex partition $V_1, \dots, V_k$ where $V_i = \{ v: \chi(v) = i \}$.\label{line:vertexpartitioning}
		\item Each machine $i \in [k]$ receives the graph $G^L[V_i]$ and finds the greedy maximal matching $M_i := \greedymatching{G^L[V_i], \pi}$.
		\item Return matching $M := \bigcup_{i=1}^{k} M_i$.
	\end{enumerate}
\end{tboxalg}

For simplicity, throughout we write $G_i$ for the graph $G[V_i]$ and $G^L_i$ for $G^L[V_i]$. We let $L_i$ be the set of edges $\{u,v\} \in L$ with $u,v \in V_i$; that is, $L_i$ is the edge-set of $G_i^L$.  

The following result summarizes Algorithm~\ref{alg:nearmaximal}:

\begin{lemma}\label{lem:nearmaximal}
Algorithm~\ref{alg:nearmaximal} has the following desirable behavior:
\begin{enumerate}[itemsep=-0.5ex]
\item W.e.h.p.,  it uses $O(n/\Delta^{\Omega(1)})$ space per machine.
  \item W.e.h.p., it uses $O(n + m/\Delta^{\Omega(1)})$ space in total (aside from storing the original input graph.)
  \item The expected number of edges in the residual graph is at most $O(n \Delta^{0.89})$.
\end{enumerate}
\end{lemma}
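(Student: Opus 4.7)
My plan addresses the three parts of the lemma separately, with the bulk of the work concentrated in Part~(3). Parts~(1) and~(2) are space bounds that follow from Chernoff-type concentration. For Part~(1), fix a machine $i$: the set $V_i$ is a sum of independent Bernoulli indicators with mean $n/k = n/\Delta^{0.1}$, and for each vertex $v \in V_i$ its degree in $G^L[V_i]$ is dominated by a binomial with mean at most $p\Delta/k = \Delta^{0.05}$. Both quantities concentrate tightly w.e.h.p., and multiplying the per-vertex degree bound by $|V_i|$ gives per-machine space $n/\Delta^{\Omega(1)}$. Part~(2) is similar: the total sampled edge count $|L|$ is binomial with mean $mp = m/\Delta^{0.85}$ and concentrates at $m/\Delta^{\Omega(1)}$; the $O(n)$ term accounts for storing the permutation and the partition labels.

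For Part~(3), for every vertex $v$ and every $i \in [k]$ let $Z_{v,i}$ be the number of neighbors of $v$ in $V_i$ that remain unmatched in $M_i$. Whenever $v \notin M$ we have $\deg^{\text{res}}_M(v) = \sum_{i=1}^{k} Z_{v,i}$, and by symmetry of the three randomizations under relabeling of partitions, $\mu_v := \E[Z_{v,i}]$ is independent of $i$, so $\E[\sum_i Z_{v,i}] = k \mu_v$. I would split according to whether $\mu_v$ is small or large, and in both regimes a concentration bound on $Z_{v,i}$ obtained from the Efron--Stein inequality drives the argument. The target is $\Pr[\deg^{\text{res}}_M(v) > \Delta^{0.99}] = O(1/\Delta^{0.03})$ for every $v$; summing over $v \in V$ then yields the claimed $O(n/\Delta^{0.03})$ expected bound.

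The variance bound treats $Z_{v,i}$ as a function of three families of independent coordinates: the priorities $\rho_e$ of each edge, the sampling indicators $\mathbf{1}[e \in L]$, and the partition label of each vertex. Applying Proposition~\ref{prop:efronstein} to each family in turn, I need to bound the expected squared change in $Z_{v,i}$ when a single coordinate is resampled. Resampling the partition of a vertex $w$ affects $Z_{v,i}$ only via its effect on $M_i$: by the Lipschitz property of greedy maximal matching (Lemma~\ref{lem:lipschitzproperties}) only $O(1)$ vertices flip match-status, bounding the per-coordinate contribution. Resampling the priority or sample indicator of a single edge $e$ affects $Z_{v,i}$ only if both endpoints of $e$ lie in $V_i$, and the expected number of such perturbations that actually propagate into a change in $Z_{v,i}$ is controlled by Proposition~\ref{prop:greedmatchquery}, which caps the expected ``look-up region'' of the greedy oracle by the average line-graph degree of $G^L[V_i]$; this is small because $G^L[V_i]$ has maximum degree $\Delta^{0.05 + o(1)}$ w.e.h.p. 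Combining these three contributions yields a variance bound of the form $\var(Z_{v,i}) \leq \Delta^{c}$ with $c$ strictly less than $1.95$, which is exactly what the case analysis below requires.

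With the variance bound in hand, the two regimes close the argument. If $\mu_v \leq \Delta^{0.89}/2$, then $k\mu_v \leq \Delta^{0.99}/2$, and Chebyshev's inequality applied to $\sum_i Z_{v,i}$ (whose variance is controlled by the same Efron--Stein machinery applied to the sum) gives the desired $O(1/\Delta^{0.03})$ tail probability. If instead $\mu_v > \Delta^{0.89}/2$, let $j$ be the partition containing $v$; Chebyshev on $Z_{v,j}$ forces $Z_{v,j} \geq \Omega(\Delta^{0.89})$ with probability at least $1 - O(1/\Delta^{0.03})$. The deterministic key observation is that if $v \notin M_j$ then every edge $\{v,u\}$ with $u$ unmatched in $M_j$ must be absent from $L$, for otherwise greedy would match $v$ along it. Conditioning on the edges of $G^L[V_j \setminus \{v\}]$ and $\pi$ via deferred decisions and then independently sampling the edges incident to $v$, the probability that none of the $\Omega(\Delta^{0.89})$ candidate edges is in $L$ is at most $(1-p)^{\Omega(\Delta^{0.89})} = \exp(-\Omega(\Delta^{0.04}))$, which is negligible. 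The principal obstacle in the whole proof is the variance analysis in the preceding paragraph: carefully disentangling how each coordinate perturbs the greedy matching on $V_i$, and leveraging the query-complexity bound to prevent the sum of squared differences from swamping the target variance, is where the argument is most delicate.
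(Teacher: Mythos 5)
Your overall strategy is the same as the paper's: bound $\var(Z_{v,i})$ via Efron--Stein, use a query process for the greedy oracle together with Proposition~\ref{prop:greedmatchquery} to control how many coordinates actually influence $Z_{v,i}$, and then split on whether $\E[Z_{v,1}]$ is small (Markov/Chebyshev on the sum) or large (forcing $v$ to be matched in its own partition). However, there is a genuine gap in the middle step. Proposition~\ref{prop:greedmatchquery} only bounds the query complexity \emph{on average over edges}, so the resulting bound is on $\E[\sum_v B(v)]$, not on $\E[B(v)]$ for each individual $v$: a particular vertex can have expected query complexity (and hence Efron--Stein variance) far exceeding $\Delta^{c}$. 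Your proposal asserts a uniform per-vertex variance bound and concludes $\Pr[\deg^{\text{res}}_M(v) > \Delta^{0.99}] = O(\Delta^{-0.03})$ \emph{for every} $v$; that claim is not supported by the tools you invoke. The paper resolves this by introducing a third bucket: a vertex is \emph{bad} if its conditional expected query complexity $\E_{\vec x}[B(v)\mid \pi,L]$ exceeds $\Delta^{1.4}$; Markov applied to $\E[\sum_v B(v)] \le O(n\Delta^{1.15})$ shows only $O(n/\Delta^{0.25})$ vertices are bad in expectation, and these are simply charged to the final count rather than given a tail bound. Since the lemma only asks for an expected count, this averaging step is exactly what is needed, and your sketch cannot close without it (or some substitute for it).

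Two smaller issues. First, your constants do not quite cohere: for the high-mean regime you apply Chebyshev to a single $Z_{v,j}$ with deviation $\Theta(\Delta^{0.89})$, which requires $\var(Z_{v,j}) \le \Delta^{1.75}$ to get an $O(\Delta^{-0.03})$ failure probability, so ``$c$ strictly less than $1.95$'' is not sufficient there (the paper works with thresholds $2\Delta^{0.86}$ and variance $O(\Delta^{1.4})$, and uses plain Markov in the low-mean regime precisely so that no variance bound on the \emph{sum} $\sum_i Z_{v,i}$ is needed). Second, for Part~(1) your claim that each vertex's degree in $G^L[V_i]$ ``concentrates tightly w.e.h.p.''\ around its mean $\Delta^{0.05}$ fails when $\Delta$ is small (e.g.\ constant or polylogarithmic), since the mean is then too small for a Chernoff bound to give exponentially high probability; the paper handles this case by applying the bounded differences inequality to the aggregate edge count of $G_i^L$ rather than to individual degrees. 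Your high-mean argument via deferred decisions on the edges incident to $v$ is fine --- it is essentially an inline re-derivation of Lemma~\ref{lem:edgesamplegreedy} --- but the missing bad-vertex bucket is the step you must add.
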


The first two parts of Lemma~\ref{lem:nearmaximal} are straightforward consequences of the randomization.
 \begin{proof} [Proof of Lemma~\ref{lem:nearmaximal} part 1 and 2]
 First, by a straightforward Chernoff bound, we have $|V_i| = \Theta(n/k)$ w.e.h.p. for all $i \in [k]$; here note that $\E[V_i] = n/k = n/\Delta^{0.12} \geq \poly(n)$. 
 
 We next claim that $|L_i| \leq O(n \Delta p / k^2)$ for all $i$. For, consider random variable $Y_i = |L_i|$. For each vertex $v$, an incident edge $e = \{u, v\}$ will belong to $L_i$ if $e$ is sampled in $L$ and  vertex $u$ also belongs to $V_i$. So the expected number of neighbors of $v$ in $G^L_i$ is at most $\Delta p / k = \Delta^{0.11}$.  If $\Delta \geq n^{0.1}$, then a simple Chernoff bound shows that the number of neighbors of each vertex $v$ in each $G_i^L$ is concentrated around $O(\Delta p/k)$ w.e.h.p. Combined with the bound on $|V_i|$, this implies that $Y_i \leq O(n/k) \cdot O(\Delta p/k) = O(n \Delta p/k^2)$ w.e.h.p.

Otherwise, if $\Delta < n^{0.1}$, then observe that $Y_i$ can be regarded as a function of the vertex partition $\chi$ and the edge set $L$. There are $O(n \Delta)$ such random variables, and each of these changes $Y_i$ by at most $\Delta$. Also $\E[Y_i] \leq n \Delta p /k^2 = n \Delta^{-0.01}$.  Therefore, by Proposition~\ref{prop:bd-dif}, w.e.h.p., we have $Y_i \leq \E[Y_i] + \Delta \cdot n^{0.01} \cdot \sqrt{O(n \Delta)}$;  as $\Delta \leq n^{0.1}$ this implies that $Y_i \leq O( n \Delta p / k^2) $ w.e.h.p.

So each machine $i$ requires space of $|V_i| = O(n/k) = O(n \Delta^{-0.12})$ for its vertices and $|L_i| = O(n \Delta p/k^2) = O(n \Delta^{-0.01})$ for its edges. To show the bound on total space usage,  note that the total edge count of all the graphs $G^L_i$ is at most $|L|$, since each edge lives on at most one machine. We have $\E[ |L| ] = m p$ so  a straightforward  Chernoff bound gives $|L| \leq  O(m p + n) = O(m \Delta^{-0.77} + n)$ w.e.h.p. Furthermore, storing the vertex partition $\chi$ requires only $O(n)$ total space. 
\end{proof}

The third part of Lemma~\ref{lem:nearmaximal} is the hard part; we show it next in Section~\ref{sec:nearmaximal}. In  Section~\ref{sec:leftoververtices}, we add a few post-processing steps after Algorithm~\ref{alg:nearmaximal} which give Lemma~\ref{lem:degreereduction}. 

\section{Proof of Lemma~\ref{lem:nearmaximal} Part 3}\label{sec:nearmaximal}
For any vertex $v \in V$ and any $i \in [k]$, we consider the random variable
$$
Z_{v, i} := \bigl| V_i \cap N_{G[V \setminus M]}(v) \bigr|,
$$ that is, the number of neighbors of $v$ which are unmatched in $G_i$. Here $v$ does not necessarily belong to $V_i$. In particular, if $v$ is not matched in $M$, we have $\res_M(v) = Z_{v, 1} + \dots + Z_{v, k}$. We further define the related random variable $Z'_v$ as:
$$
Z'_v :=  \begin{cases}
Z_{v, \chi(v)} & \text{if $v \notin M$} \\
0 & \text{if $v \in M$,}
\end{cases}
$$
which is equivalent to the residual degree of $v$ in its own partition. 

The key to analyzing Algorithm~\ref{alg:nearmaximal}, as sketched in Section~\ref{sec:highlevel}, is to show that for most vertices $v$, the values $Z_{v,i}$ are nearly equal across all indices $i$.  The following claims make this precise.

\begin{claim}\label{cl:intrapartitiondegree}
For any vertex $v$ and any parameter $\beta \geq 0$, we have  $\Pr( Z'_v \geq \beta/p ) \leq e^{-\beta}$.
\end{claim}
\begin{proof}
We claim this bound holds, even after conditioning on random variables $\chi$ and $\pi$. For, suppose that $\chi(v) = i$. Here $M_i$ is formed by performing independent edge sampling on $G[V_i]$ and then taking the greedy maximal matching. Thus by Lemma~\ref{lem:edgesamplegreedy}, the probability that $v \notin M_i$ and $v$ has more than $\beta/p$ unmatched neighbors in $G_i$ is at most $e^{-\beta}$. 
\end{proof}

\begin{claim}
\label{var-bnd-lemma}
For any vertex $v$, suppose we condition on the random variables $\pi, L$, and we define the related random variable
$$
\sigma_{v \mid \pi, L}  = \sqrt{ \var(Z_{v,1} \mid \pi, L) }
$$

Then for any parameter $\alpha > 0$, we have 
$$
\Pr( \res_M(v) > k Z'_v + \alpha k) \leq O \Bigl( \frac{k \sigma_{v \mid \pi, L}^2}{ \alpha^2} \Bigr),
$$
 where all probabilities are taken with respect to the remaining random variable $\chi$.
\end{claim}
\begin{proof}
Write  $\sigma = \sigma_{v \mid \pi, L}$ and $\mu = \E[ Z_{v,1}]$ for brevity; by symmetry of the partitions, we have $\var(Z_{v,i}) = \sigma^2$ and $\E[Z_{v,i}] = \mu$ for any index $i$. Chebyshev's inequality immediately gives 
$$
\Pr(  | Z_{v,i} - \mu | > \alpha/2 ) \leq O ( \sigma^2 / \alpha^2 ).
$$
By a union bound over all indices $i \in [k]$, there is a probability of at least $1 - O ( k \sigma^2 / \alpha^2 )$ that $| Z_{v,i} - \mu | \leq \alpha/2$ for all $i$.   Now suppose this event has occurred. If $v$ is matched, then $\res_M(v) = Z'_v = 0$ and clearly $\res_M(v) \leq k Z'_v + \alpha k$. Otherwise,  $\res_M(v) = Z_{v,1} + \dots + Z_{v,k} \leq k \mu + k \alpha/2$ and $Z'_v = Z_{v, \chi(v)} \geq \mu - \alpha/2$, and so $\res_M(v) - k Z'_v \leq  (k \mu + k \alpha/2) - k (\mu - \alpha/2) = k \alpha$ as desired.
\end{proof}

We can combine these two estimates in the following elegant result:
\begin{lemma}
\label{var-bnd-cor}
For a vertex $v$, there holds
$$
\E[ \res_M(v) ] \leq O \bigl( k/p + k^{3/2}  \E[ \sigma_{v \mid \pi, L} ]   \bigr)
$$
where expectations are taken over all random variables $\pi, L, \chi$.
\end{lemma}
\begin{proof}
Write $\sigma = \sigma_{v \mid \pi, L}$ for brevity. If we condition on a fixed $\pi, L$, then using integration by parts, we get:
$$
\E[ \res_M(v) \mid \pi, L ] \leq k \E[  Z'_v \mid \pi, L ] + k^{3/2} \sigma +  \int_{x = k^{3/2} \sigma}^{\infty} \Pr( \res_M(v) \geq k Z'_v + x  \mid \pi, L) \ dx.
$$

We can apply Claim~\ref{var-bnd-lemma} with respect to $\alpha = x/k$, getting
\begin{align*}
& \int_{x = k^{3/2} \sigma}^{\infty} \Pr( \res_M(v) \geq k Z'_v + x  \mid \pi, L) \ dx \leq \int_{x = k^{3/2} \sigma}^{\infty} O( k^3 x^{-2} \sigma^2  ) \ dx  = O( k^{3/2} \sigma ) .
\end{align*}

Integrating over the random variables $\pi, L$ then gives:
$$
\E[ \res_M(v)  ]  \leq k \E[Z'_v] + O( k^{3/2} \E_{\pi, L} [ \sigma ]   ).
$$

To compute $\E[Z'_v]$, we again use integration by parts and apply Claim~\ref{var-bnd-lemma} to get
\[
\E[Z'_v] = \int_{x = 0}^{\infty} \Pr(Z'_v \geq x) \ dx \leq \int_{x = 0}^{\infty} e^{-x p} \ dx = 1/p. \qedhere
\]
\end{proof}

\subsection{Analysis of the Variance}
In light of Lemma~\ref{var-bnd-cor}, we need to bound the variance of $Z_{v,1}$ as a function of the vertex partition $\chi$. So let us  assume that $L$ and $\pi$ have been fixed, and $Z_{v, 1}$ is a function of $\chi$ alone, or more precisely, a function of the set of vertices in partition $V_1$. Define a vector $\vec x$ by setting $x_v = 1$ if $\chi(v) = 1$, and $x_v = 0$ otherwise; we may write $Z_{v, 1}(\vec{x})$ to emphasize that $Z_{v, 1}$ is merely a function of $\vec{x}$. Observe that $\vec{x}$ is a vector of $n$ i.i.d. Bernoulli-$1/k$ random variables. To use the Efron-Stein inequality for bounding the variance, we have to upper bound the right-hand-side of inequality
\begin{equation}\label{eq:efronvertex}
	\sigma_{v \mid \pi, L}^2 \leq \frac{1}{2}\E_{\vec{x}}\Big[\sum_{w \in V} \big(Z_{v, 1}(\vec{x}) - Z_{v, 1}(\vec{x}^{(w)})\big)^2\Big],
\end{equation}
where $\vec{x}^{(w)}$ is obtained by replacing the value of $x_w$ in $\vec{x}$ with $x'_w$ which is drawn independently from the same distribution. In other words, the $w$ summand of Eq.~(\ref{eq:efronvertex}) corresponds to the effect of repartitioning vertex $w$ on the value of $Z_{v, 1}$. 

As a starting point, we note a Lipschitz property coming from the greedy maximal matching.
\begin{claim}[Lipschitz property]\label{cl:lipschitz} There holds $(Z_{v, 1}(\vec{x}) - Z_{v, 1}(\vec{x}^{(w)}))^2 \leq 4$.
\end{claim}
\begin{proof}
 Let $V_1$ and $V'_1$ denote the vertex partitions due to $\vec{x}$ and $\vec{x}^{(w)}$ respectively, i.e., $V_1 = \{ u \mid x_u = 1 \}$ and $V'_1 = \{ u \mid x^{(w)}_u = 1 \}$. These partitions differ in at most one vertex, namely, $w$. Correspondingly, define $M_1 := \greedymatching{G[V_1], \pi}$ and $M'_1 := \greedymatching{G[V'_1], \pi}$. By Lemma~\ref{lem:lipschitzproperties} part 1, there are at most two vertices in $V$ whose match-status differs between $M_1$ and $M'_1$.
\end{proof}

Note that $\Pr( \vec x \neq {\vec x}^{(w)}) \leq 2/k$. Thus, Claim~\ref{cl:lipschitz} gives $\E[ (Z_{v,1} - Z_{v,1}(\vec{x}^{(w)}) )^2 ] \leq 8/k$, and Eq.~(\ref{eq:efronvertex}) in turn implies $\E[\var(Z_{v,1})] \leq 4n /k$. To get a tighter bound independent of $n$, we analyze a type of ``query process'' to determine $Z_{v,1}$ while only examining a subset of the entries of $\vec x$.

\smparagraph{The query process.} We can use a recursive query process, which we denote by  $\edgeoracle_{\pi}(e, \vec{x})$, to determine whether a given edge $e$ belongs to the matching $M_1(\vec{x})$ --- here we emphasize that $\pi, L$ should be regarded as fixed. This is very similar to the edge oracle for greedy matching  discussed in Section~\ref{sec:greedy}, except that instead of querying edges, it queries the entries of the vector $\vec{x}$.

\begin{titledtbox}{$\edgeoracle_{\pi}(e, \vec{x})$: A query-process to determine whether  $e \in M_1(\vec{x})$.}
	\begin{algorithmic}
		\State Let $e = \{u, v\}$. Query $x_u$ and $x_v$; \textbf{if} $x_u = 0$ or $x_v = 0$, \textbf{then} \Return \no.
		\State Let $e_1, \ldots, e_d$ be the incident edges to $e$ in $G^L$ sorted as $\pi(e_1) < \pi(e_2) < \dots < \pi(e_d)$.
		\For{$i = 1, \ldots, d$}
			\If{$\pi(e_i)  < \pi(e)$}
				\State	\textbf{if} {$\edgeoracle_{\pi}(e_j, \vec{x}) = \yes$} \textbf{then return} \no{}
			\EndIf
		\EndFor
		\State \Return \yes{}
	\end{algorithmic}
\end{titledtbox}

We also define a {\em degree oracle} $\degreeoracle_{\pi}(v, \vec{x})$ to determine the value of $Z_{v, 1}(\vec{x})$. This checks whether each neighbor $u$ of $v$ appears in $V_1$ and is matched, which in turn requires checking whether every edge incident to $u$ appears in matching of $G^L[V_1]$:

\begin{titledtbox}{$\degreeoracle_{\pi}(v, \vec{x})$: A query process to determine the value of $Z_{v, 1}(\vec{x})$.}
	\begin{algorithmic}
		\State $c \gets 0$
		\For{all vertices $u \in N_G(v)$}
			\State Query $x_u$.
			\If{$x_u = 1$}
				\State Execute $\edgeoracle_{\pi}( \{ u, w \}, \vec{x})$ for all vertices $w \in N_{G^L}(u)$.
	\State					\textbf{if}  $\edgeoracle_{\pi}( \{ u, w \}, \vec{x}) = \no{}$ for all such vertices $w$ \textbf{then} $c \gets c+1$ \Comment{$u$ is unmatched in $M_1$}
			\EndIf
		\EndFor
		\State \Return $c$
	\end{algorithmic}
\end{titledtbox}

We now analyze the {\em query complexity} $B(v)$ of the oracle $\degreeoracle_{\pi}$, i.e., the number of indices in $\vec{x}$ that are queried when running $\degreeoracle_{\pi}(v)$. 
\begin{claim}\label{cl:varboundedbyB}
For fixed $\pi, L$ we have $\sigma_{v \mid \pi, L}^2 \leq 2 B(v)$.
	\end{claim}
	\begin{proof}
	
By definition, the value of $Z_{v, 1}(\vec{x})$ can be uniquely determined by revealing the indices of $\vec{x}$  quered by $\degreeoracle_\pi(v, \vec{x})$. So changing other indices $w$ of $\vec{x}$ cannot affect $Z_{v, 1}$ and hence $Z_{v, 1}(\vec{x}) = Z_{v, 1}(\vec{x}^{(w)})$ for such $w$. There are $B(v)$ indices queried by $v$; for each such index $w$, Claim~\ref{cl:lipschitz} gives $(Z_{v, 1}(\vec{x}) - Z_{v, 1}(\vec{x}^{(w)}))^2 \leq 4$. Overall, we get 
	$$
	\sum_{w \in V} (Z_{v, 1}(\vec{x}) - Z_{v, 1}(\vec{x}^{(w)}))^2 \leq 4 B(v).
	$$
	
By  the Efron-Stein inequality (\ref{eq:efronvertex}) this immediately implies $\sigma_{v \mid \pi, L}^2 \leq 2 B(v).$
\end{proof}

To bound $B(v)$, let us first define $A_1(e)$ for an edge $e \in L_1$ to be the number of edges in $L_1$ on which the edge oracle is called (recursively) in the course of running $\edgeoracle_\pi(e, \vec{x})$; this includes edge $e$ itself. We similarly define $A_1(v)$ to be the number of edges in $L_1$ that are queried in the course of running $\degreeoracle_{\pi}(v)$.  Note that when running $\edgeoracle_\pi(e, \vec{x})$ or $\degreeoracle_{\pi}(v)$, only edges  in $L_1$ can generate new recursive calls; other edges are checked, but immediately discarded.

\begin{claim}
\label{cl:l1tol}
For any vertex $v$, we have $\E[B(v)] \leq O(\Delta^2 p + \E[ A_1(v) ] \Delta p)$,  where both expectations are taken over $\chi$, $L$, and $\pi$.
\end{claim}
\begin{proof}
First,  $\degreeoracle_{\pi}(v, \vec x)$ will query each vertex $u \in N_G(v)$, and if $u \in G_1$ it will then query all vertices $w \in N_{G^L}(u)$. Thus, there are at most $\sum_{u \in N_G(v)} (1 + \deg_{G^L}(u))$ queried vertices directly produced in $\degreeoracle_{\pi}(v, \vec x)$.  This has expectation at most $\Delta (1 + \Delta p)$, which is $O(\Delta^2 p )$ since $\Delta p \geq 1$.

Next, let us count the queries produced recursively through calls to $\edgeoracle_{\pi}(e, \vec{x})$.  Let $J$ denote the set of edges in $L_1$ queried during execution of $\degreeoracle_{\pi}(v)$, where $|J| = A_1(v)$ by definition.  Suppose we condition on the random variables $\chi, L_1$ and $\pi$; this determines the set $J$.  The only randomness remaining is to determine whether each edge $e \in G_2 \cup \dots \cup G_k$ goes into $L$. 

For each edge $e =\{u,w \} \in J$, the execution of $\edgeoracle_{\pi}(e, \vec{x})$ calls $\edgeoracle_{\pi}(f, \vec{x})$ for edges $f \in L \setminus L_1$ which touch $e$; each of these will query two vertices, but the query process will not proceed further when they are discovered to lie outside $L_1$. So $e$ incurs $2( \deg_{L \setminus L_1}(u) + \deg_{L \setminus L_1}(w))$ additional queries, which has expectation at most $4 \Delta p$.  

Thus for fixed $\chi, L_1, \pi$, the expected number of resulting queries is $ |J| (1 + 4\Delta p)$.  Integrating over $\chi, L_1, \pi$, and noting that $\Delta p \geq 1$, gives the claimed bound.
\end{proof}

\begin{claim}
\label{claim:aebound}
We have $\E [ \sum_{e \in L_1} A_1(e) ] \leq O( n \Delta p / k^2 + n \Delta^2 p^2 / k^3 )$ where the expectation is taken over $\chi$, $L$, and $\pi$.
\end{claim}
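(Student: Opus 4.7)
The plan is to invoke Proposition~\ref{prop:greedmatchquery} on the random subgraph $G_1^L$ and then take expectation over $\chi$ and $L$. The first step is to verify that, for $e \in L_1$, the quantity $A(e)$ is precisely the query complexity of the standard greedy-matching edge oracle run on the graph $G_1^L$ with edge $e$ under the induced ordering of $\pi$ restricted to $L_1$. Inspecting the code of $\edgeoracle_\pi$: when it encounters an incident edge $e' = \{u',v'\}$ of $e$ in $G^L$ with $\pi(e') < \pi(e)$, it first queries $x_{u'}$ and $x_{v'}$; if either is $0$ then $e' \notin L_1$ and the oracle returns $\no$ immediately without generating further recursive calls. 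Thus only edges of $L_1$ ever give rise to deeper recursion, and the sequence of edges of $L_1$ examined by $\edgeoracle_\pi(e, \vec{x})$ is exactly the sequence examined by the standard edge oracle on the graph $G_1^L$.

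Granting this identification, Proposition~\ref{prop:greedmatchquery}, applied with $H = G_1^L$ and the uniformly random ordering $\pi$, shows that for any fixing of $\chi$ and $L$,
\[
\E_\pi\Big[\sum_{e \in L_1} A(e)\Big] \;\leq\; O\Big( \sum_{u \in V} d_{G_1^L}(u)^2 \Big),
\]
since summing the expected per-edge cost $O(d)$ (with $d$ the average line-graph degree) over the $|L_1|$ edges of $G_1^L$ yields $O\bigl( \sum_u d_{G_1^L}(u)^2 \bigr)$. It therefore suffices to prove $\E_{\chi, L}\bigl[\sum_u d_{G_1^L}(u)^2\bigr] = O(n)$.

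For a fixed vertex $u$, write $d_{G_1^L}(u) = \mathbb{1}[u \in V_1] \cdot \sum_{v \in N_G(u)} X_v$, where $X_v$ is the indicator that $v \in V_1$ and $\{u,v\} \in L$. Since $\mathbb{1}[u \in V_1]$ depends only on $\chi(u)$ while each $X_v$ depends only on $\chi(v)$ and an independent sampling bit, $\mathbb{1}[u \in V_1]$ is independent of the $X_v$'s, which are themselves mutually independent Bernoulli with mean $p/k$. A direct expansion of the second moment gives
\[
\E\bigl[d_{G_1^L}(u)^2\bigr] \;\leq\; \frac{1}{k}\Bigl( \deg_G(u) \cdot \frac{p}{k} + \deg_G(u)^2 \cdot \frac{p^2}{k^2} \Bigr),
\]
and plugging in $p = \Delta^{-0.85}$, $k = \Delta^{0.1}$, and $\deg_G(u) \leq \Delta$ bounds this by $O(1)$; summing over all $n$ vertices and combining with the previous display via the tower property yields the claimed $O(n)$ bound. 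The main obstacle is the correspondence in the first step — making sure that the recursive structure of $\edgeoracle_\pi$ exactly matches the standard greedy-matching edge oracle on $G_1^L$, so that Proposition~\ref{prop:greedmatchquery} applies cleanly; the remaining second-moment estimate is a routine calculation exploiting the independence of the partitioning and sampling randomness.
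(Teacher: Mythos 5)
Your proposal is correct and follows essentially the same route as the paper: identify $A(e)$ with the query complexity of the standard edge oracle on $G_1^L$ under the induced random ordering, invoke Proposition~\ref{prop:greedmatchquery}, and then bound the expectation of $|L_1|$ plus the number of intersecting edge pairs over the randomness of $\chi$ and $L$. The only cosmetic difference is that you package the latter quantity as $\sum_u d_{G_1^L}(u)^2$ and do a per-vertex second-moment computation, while the paper counts intersecting pairs of $G$ directly and multiplies by the survival probability $p^2/k^3 = \Delta^{-2}$; both yield the same $O(n)$ bound.
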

\begin{proof}
Let us first suppose that $L$ and $\chi$ are fixed, and so $G_i^L$ is fixed as well. The only randomness remaining is the permutation $\pi$.  We are only interested in edges of $L_1$, so the edges outside $L_1$ have no effect on the behavior of $\edgeoracle_{\pi}$. Thus, $A_1(e)$ is precisely the query complexity of $e$ for a greedy matching of $G_1^L$ under a random permutation $\pi$. By Proposition~\ref{prop:greedmatchquery}, we have:
$$
\E_{\pi} \Big[ \sum_{e \in L_1} A_1(e) \mid L, \chi \Big] \leq |L_1| + |R_1|,
$$
where $R_1$ is the set of intersecting edge pairs in $G_1^L$. Integrating over random variables $L$ and $\chi$ gives:
$$
\E \Big[ \sum_{e \in L_1} A_1(e) \Big] \leq \E[ |L_1| + |R_1| ].
$$

 Each edge $e \in E$ goes into $L_1$ with probability $p/k^2 $,  so $\E[ |L_1| ] = m p/k^2$. Likewise, $G$ has at most $m \Delta / 2$ pairs of intersecting edges and each of these survives to $R_1$ with probability $p^2/k^3$. So, $\E[ |R_1| ] \leq m \Delta p^2/k^3$. Finally, we observe that $m \leq n \Delta$.
\end{proof}

We now can bound the average value of $B(v)$.
\begin{claim}\label{lem:boundonA}
We have $\tfrac{1}{n} \E[ \sum_{v \in V} B(v) ]  \leq O (\Delta^4 p^3 /k^3 + \Delta^3 p^2 / k^2+ \Delta^2 p)$ where the expectation is taken over $\chi, L, \pi$.
\end{claim}
\begin{proof}
For any vertex $v$, observe that $$
A_1(v) =  \sum_{\substack{u \in N(v), w \in N(u) \\ \{u, w \} \in L_1}}  A_1( \{u, w \}).
$$
 Summing over $v \in V$, we get:
\begin{align*}
\sum_{v \in V} A_1(v) &= \sum_{ \{u,w \} \in L_1} A_1( \{u,w \} ) \Bigl( \sum_{v \in N(u)} 1 + \sum_{v \in N(w)} 1 \Bigr) \leq 2 \Delta \sum_{e \in L_1} A_1(e).
\end{align*}
Taking expectations and applying Claim~\ref{claim:aebound}, we therefore have
$$
\E \big[ \sum_{v \in V} A_1(v) \big] \leq 2 \Delta \E \big[ \sum_{e \in L_1} A_1(e) \big] \leq O( n \Delta^2 p / k^2 + n \Delta^3 p^2 / k^3  ).
$$

Next applying Claim~\ref{cl:l1tol} gives
\[
\E \big[ \sum_{v \in V} B(v) \big] \leq O\big( \Delta p \E \big[ \sum_{v \in V} A_1(v) \big] + n \Delta^2 p \big) \leq O\big(n \Delta^3  p^2 / k^2+ n \Delta^4 p^3 /k^3 + n  \Delta^2 p \big).  \qedhere
\]
\end{proof}

Finally, we combine everything to get our bound on the average expected residual degree
\begin{claim}\label{lem:boundonA2} We have $
\tfrac{1}{n} \sum_{v \in V} \E[ \res_M(v) ] \leq O( k / p + \Delta^2 p^{3/2} + \Delta^{3/2} k^{1/2} p + \Delta k^{3/2} p^{1/2}  ) 
$
where expectations are taken over $\chi, L, \pi$.
\end{claim}
\begin{proof}
We can sum over vertices $v \in V$ and apply Lemma~\ref{var-bnd-cor} to get
$$
\sum_{v \in V} \E[ \res_M(v) ] \leq \sum_{v \in V} O \big( k/p + k^{3/2} \E_{\pi, L}[ \sigma_{v \mid \pi, L} ]\big)
$$

By Claim~\ref{cl:varboundedbyB}, we have $\sigma_{v \mid \pi, L}^2 \leq 2 B(v)$ for fixed $\pi, L$.  Taking expectations over $\pi, L$ gives:
$$
\sum_{v \in V} \E[ \res_M(v) ] \leq O \Bigl( n k / p + k^{3/2} \sum_{v \in V} \E[ \sqrt{B(v)} ] \Bigr).
$$

By Jensen's inequality, we have $\E[ \sqrt{B(v)} ] \leq \sqrt{\E[B(v)]}$ for any vertex $v$. Again by Jensen's inequality, we have 
$$
\tfrac{1}{n} \sum_{v \in V}  \sqrt{\E[B(v)]} \leq \sqrt{  \tfrac{1}{n} \sum_{v \in V} \E[B(v)] }
$$

 Claim~\ref{lem:boundonA} gives an upper bound on the sum $\tfrac{1}{n} \sum_v \E[B(v)]$; after collecting terms, this gives the claimed result.
\end{proof}

Plugging in the values $p,k$ gives $\sum_{v \in V} \E[ \res_M(v) ] \leq O(n \Delta^{0.89})$. This concludes the proof of   Lemma~\ref{lem:nearmaximal} part (3).

\section{Putting Everything Together}\label{sec:leftoververtices}
To finish Lemma~\ref{lem:degreereduction}, we need to remove all the remaining high-degree vertices, not just a (large) fraction of them.  Algorithm~\ref{alg:leftoververtices} handles these clean-up steps, with some additional implementation details discussed below.

\begin{tboxalg}{}\label{alg:leftoververtices}
	\begin{enumerate}[label={(\arabic*)},ref={\arabic*-}, topsep=0pt,itemsep=0ex,partopsep=0ex,parsep=1ex, leftmargin=*]
	\item Generate a matching $M$ such that the residual graph has at most $n \Delta^{0.9}$ edges  (see below)
		\item Sample each edge of $E$ with probability $q = \Delta^{-0.91}$ and let $L$ be the set of sampled edges.
		\item Put $G' = (V \setminus M, L)$ in machine 1, choose an arbitrary permutation $\pi$ over its edges and return matching $M' := \greedymatching{G', \pi}$.
		\item Let $U$ be the set of vertices $u$ with $\res_{M \cup M'}(u) \geq \Delta^{0.92}$,  and let $F$ be the set of edges with at least one endpoint in $U$.
\item Put $G'' = (V \setminus (M \cup M'), F)$ in machine 1, and choose an arbitrary maximal matching $M''$ of $G''$.
\item Return matching $M \cup M' \cup M''$.
	\end{enumerate}
\end{tboxalg}

\begin{claim}
\label{cl:alg1whp}
Step (1) of Algorithm~\ref{alg:leftoververtices} can be implemented to succeed w.e.h.p. using $O(n/\Delta^{\Omega(1)})$ space per machine and $O(m + n)$ total space.
\end{claim}
\begin{proof}
  We assume the original graph has $m \geq n \Delta^{0.9}$ edges as otherwise there is nothing to do.  We also assume that $n, \Delta$ are larger than any needed constants; if not, the entire graph can be put on a single machine in $O(n)$ space and maximal matching (or any other problem) can be solved trivially in a single round. 
  
  Now consider running Algorithm~\ref{alg:nearmaximal} to generate a matching $M$, and let $X$ be the number of edges in the residual graph. Lemma~\ref{alg:nearmaximal} has shown that $\E[X] \leq O(n \Delta^{0.89}) \leq n \Delta^{0.9} / 2$ for large enough $\Delta$, and so we need to show concentration for $X$.  There are two cases depending on the size of $\Delta$.

\smparagraph{Case 1: $\pmb{\Delta > n^{0.1}}$.} Markov's inequality applied to $X$ gives $\Pr[X > n \Delta^{0.9}] \leq 1/2$. We can run $t = n^a$ parallel iterations of Algorithm~\ref{alg:nearmaximal} for some constant $a > 0$, generating matchings $M_1, \dots, M_t$. Since they are independent, there is a probability of at least $1 - 2^{-t} = 1 - e^{-\poly(n)}$ that at least one matching $M_i$ has $X < n \Delta^{0.9}$ as desired. Each application of Algorithm~\ref{alg:nearmaximal} separately uses $O(n + m/\Delta^{\Omega(1)})$ space. Therefore, the $t$ iterations in total use $O(n^{1+a} +  n^a m/\Delta^{\Omega(1)})$ space. Since $\Delta > n^{0.1}$ and $m \geq  n \Delta^{0.9}$, this is $O(m)$ for sufficiently small constant $a$.

\smparagraph{Case 2: $\pmb{\Delta < n^{0.1}}$.} We can regard $X$ as being determined by $O(n \Delta)$ random variables, namely, $\rho, \chi, L$.  By Lemma~\ref{lem:lipschitzproperties}, modifying each entry of $\rho, \chi$, or $L$ only changes the match-status of $O(1)$ vertices. Each such vertex,  in turn, has only $\Delta$ neighbors, which are the only vertices whose degree in $G[ V \setminus M]$ is changed. Thus, changing each of the underlying random variables can only change $X$ by $O( \Delta^2 )$. By Proposition~\ref{prop:bd-dif}, therefore, w.e.h.p. we have
$$
X \leq \E[X] +  O(\Delta^2) n^{0.01} \sqrt{ n \Delta }  \leq n \Delta^{0.9}/2 + O(n^{0.51} \Delta^{2.5}).
$$
As $\Delta \leq n^{0.1}$, this is at most $n \Delta^{0.9}$ for large enough $n$.
\end{proof}

\begin{claim}
\label{cl:left1}
The vertex set $U$ at step (4) satisfies $|U| \leq O(n/\Delta^{1.01})$ w.e.h.p.
\end{claim}
\begin{proof}
By Lemma~\ref{lem:edgesamplegreedy} with $\beta = \Delta^{0.01}$, any vertex $v$ has $$
\Pr( \res_{M \cup M'}(v) \geq \Delta^{0.92} ) \leq e^{-\Delta^{0.01}}.
$$
 So, letting $Y = |U|$, we have $\E[ Y ] \leq n e^{-\Delta^{0.01}}.$ If $\Delta > n^{0.1}$, this already implies by Markov's inequality that $Y < 1$ w.e.h.p.  Otherwise, if $\Delta < n^{0.1}$, then we use the bounded differences inequality. Here, $Y$ can be regarded as a function of $n \Delta$ random variables, namely, the membership of each edge in $L$. By Lemma~\ref{lem:lipschitzproperties}, each edge affects the match-status of $O(1)$ vertices, and hence can change $Y$ by at most $O(\Delta)$. By Proposition~\ref{prop:bd-dif}, we therefore have w.e.h.p.
$$
Y \leq \E[Y] + O( \Delta \cdot \sqrt{n \Delta} \cdot n^{0.01}) \leq n e^{-\Delta^{0.01}} + O( \Delta^{1.5} n^{0.51}).
$$
By our assumption that $\Delta \leq n^{0.1}$, this is $O(n/\Delta^{1.01})$.
\end{proof}

\begin{claim}
Algorithm~\ref{alg:leftoververtices} uses $O(n/\Delta^{\Omega(1)})$ space per machine and total space $O(m + n)$ w.e.h.p. At the end of the process, the maximum degree of $G[V \setminus (M \cup M' \cup M'')]$ is at most $\Delta^{0.92}$.
\end{claim}
\begin{proof}
Claim~\ref{cl:alg1whp} shows these bounds hold for step (1). For step (3), the graph $G[V \setminus M]$  has at most $n \Delta^{0.9}$ edges, so $\E[ |L| ] \leq n \Delta^{0.9} q  \leq O(n/\Delta^{0.01})$; then a simple Chernoff bound shows $|L| \leq O(n/\Delta^{0.01})$ w.e.h.p,  so $G'$ can be stored on a single machine. For step (5), observe that $|F| \leq |U| \cdot \Delta \leq O(n/\Delta^{0.01})$ by Claim~\ref{cl:left1}; thus, again $G''$ can be stored onto a single machine.

Since $M''$ is a maximal matching of $G''$, all remaining vertices of $G[V \setminus (M \cup M' \cup M'')]$ must have degree at most $\Delta^{0.92}$.
\end{proof}

In particular, Algorithm~\ref{alg:leftoververtices} satisfies the claim of Lemma~\ref{lem:degreereduction}.

\appendix

\section{Useful Properties of Sequential Greedy Maximal Matching}
\label{app:greedy}
We show here the properties of the sequential greedy maximal matching used in the paper.

\begin{proof}[Proof of Lemma~\ref{lem:edgesamplegreedy}]
Consider the following equivalent method of generating $M$.  We iterate over the edges in $E$ in the order of $\pi$. Upon visiting an edge $e$, if one of its incident edges belongs to $M$, we call it {\em irrelevant} and discard it. Otherwise, we draw a Bernoulli-$p$ random variable $X_e$; if $X_e = 1$, we call $e$ {\em lucky} and add it to $M$ otherwise we call $e$ \emph{unlucky}.

If $v$ is matched in $M$, then $\res_M(v) = 0$. Otherwise, all of its remaining edges in $G[V \setminus M]$ should have been unlucky. That is, every time we encounter an edge $e$ in this process, it must have been irrelevant or we must have chosen $X_e = 0$. Furthermore, in order to get $\res_M(v) > \beta/p$,  there must remain at least $\beta/p$ edges which are not irrelevant. During this process, the probability that all such edges are marked unlucky is at most $(1-p)^{\beta/p} \leq e^{-\beta}$.
\end{proof}

\begin{proof}[Proof of Lemma~\ref{lem:lipschitzproperties}]
We start with the proof of the first part. Suppose that $G'$ is obtained by removing some vertex $v$ from $G$. Let $M := \greedymatching{G, \rho}$ and $M' := \greedymatching{G', \rho}$ and let $D := M \oplus M'$ denote the symmetric difference of $M$ and $M'$, i.e. $D = (M \setminus M') \cup (M' \setminus M)$. Note that the match-status of a vertex differs in $M$ and $M'$ if and only if its degree in $D$ is one. Therefore, it suffices to show that there are at most two such vertices in $D$.
		
We claim that $D$ has at most one connected component (apart from isolated vertices). For sake of contradiction, suppose $D$ has some non-trivial component $C$ which does not include vertex $v$. Let $e$ be the edge in $C$ with the highest priority, and suppose that $e \in M, e \notin M'$ (the case where $e \in M'$ is similar). Here $e \in G'$ since $v$ is not an endpoint of edge $e$. So by definition of the greedy matching, there must be some edge $f \in M'$ connected to $e$ with higher priority. However, since $e \in M$, it must be that $f \notin M$ and hence $f \in M \oplus M'$. So $f \in C$, contradicting our choice of $e$. 

Now $D$ is composed of the edges of two matchings, so its unique component is either a path or a cycle. The latter has no vertex of degree one and the former has two; proving part 1 of Lemma~\ref{lem:lipschitzproperties}.

The other two parts of Lemma~\ref{lem:lipschitzproperties} follows from a similar argument. If an edge $e$ is removed from $G$ or its entry in $\rho$ is changed, then again the symmetric difference $M \oplus M'$ of the matchings would contain only one connected component which has to contain $e$. Since this component is a cycle or a path, there are at most two vertices whose match-status differs in the two matchings.
\end{proof}

\begin{proof}[Proof of Proposition~\ref{prop:greedmatchquery}]
Let $H$ be the line graph of $G$, so $H$ has $m$ vertices and $r$ edges. Also, $\E_{\pi}[A(e)]$ is the expected query complexity of the greedy maximal independent set of $H$ under a random permutation. The result~\cite[Theorem 2.1]{DBLP:conf/stoc/YoshidaYI09} bounds the average value of $A(e)$ in terms of vertex and edge counts of $H$ as $\frac{1}{m} \E_{\pi} \big[ \sum_{e \in E} A(e) \big] \leq 1 + \frac{r}{m}$. We obtain the stated result by multiplying through by $m$.
\end{proof}

\bibliographystyle{plainnat}
\bibliography{referencesMPC}
	
\end{document}